\theoremstyle{plain}
\newtheorem{theorem}{Theorem}[section]
\newtheorem{lemma}[theorem]{Lemma}
\newtheorem{proposition}[theorem]{Proposition}
\newtheorem{corollary}[theorem]{Corollary}
\theoremstyle{definition}
\newtheorem{definition}[theorem]{Definition}
\theoremstyle{remark}
\newtheorem{remark}[theorem]{Remark}
\newcommand{\ii}{\mathrm{i}}
\newcommand{\dd}{\,\mathrm{d}}
\newcommand{\Res}{\operatorname*{Res}}
\title{A Non-Reciprocal Elliptic Spectral Solution of the Right-Angle Penetrable Wedge Transmission Problem for $\nu=\sqrt{2}$}
\author{Jonas Matuzas\thanks{Email: \texttt{jonas.matuzas@gmail.com}}}
\date{}
\begin{document}
\maketitle

\begin{abstract}
We consider the two-dimensional time-harmonic transmission problem for an impedance-matched ($\rho=1$) right-angle penetrable wedge
at refractive index ratio $\nu=\sqrt{2}$, in the integrable lemniscatic configuration $(\theta_w,\nu,\rho)=(\pi/4,\sqrt{2},1)$.
Starting from Sommerfeld spectral representations, the transmission conditions on the two wedge faces yield a closed spectral functional
system for the Sommerfeld transforms $Q(\zeta)$ and $S(\zeta)$.
In this special configuration the associated Snell surface is the lemniscatic curve $Y^2=2(t^4+1)$, uniformized by square-lattice
Weierstrass functions with invariants $(g_2,g_3)=(4,0)$.
We construct an explicit meromorphic expression for a scattered transform $Q_{\mathrm{scat}}$ as a finite Weierstrass--$\zeta$ sum
plus an explicitly constructed pole-free elliptic remainder, with all pole coefficients computed algebraically from the forcing pole set.
A birational (injective) uniformization is used to avoid label collisions on the torus and to make the scattered-allocation pole exclusion well posed.
The resulting closed form solves the derived spectral functional system and satisfies the local regularity constraints imposed at the physical basepoint.
However, numerical reciprocity tests on the far-field coefficient extracted from $Q_{\mathrm{scat}}$ indicate that the construction is generally
\emph{non-reciprocal}; accordingly we do not claim that the resulting diffraction coefficient coincides with the reciprocal physical transmission
scattering solution.
The result remains restricted to this integrable lemniscatic case; the general penetrable wedge remains challenging
(see \cite{AntipovSilvestrov2007,NethercoteAssierAbrahams2020,KunzAssier2023} and, in a related high-frequency penetrable-corner setting,
\cite{GrothHewettLangdon2018}).
\end{abstract}

\paragraph{MSC 2020:} 35J05, 78A45, 33E05, 45E10.

\paragraph{Keywords:} penetrable wedge diffraction; impedance-matched penetrable wedge; Sommerfeld integrals;
Wiener--Hopf; Riemann--Hilbert; elliptic functions; lemniscatic curve; Weierstrass zeta-function.

\section*{AI disclosure and responsibility statement}
This manuscript was prepared with extensive assistance from the ChatGPT-5.2 Pro large language model.
The AI system generated the \LaTeX{} source, including the exposition, symbolic derivations, and formula manipulations.
The author reviewed, edited, and approved the final manuscript and accepts responsibility for its scientific content and any remaining errors.

\section*{Intuitive overview}
The transmission problem for a penetrable wedge is conveniently expressed through Sommerfeld-type
integral representations in which the boundary/interface data are encoded by spectral densities.
In general, the coupling imposed by the transmission conditions leads to a genuinely matrix
Wiener--Hopf/Riemann--Hilbert factorization, and explicit closed forms are rare.
In the special impedance-matched right-angle configuration studied here, $\nu=\sqrt{2}$, $\rho=1$, and
$\theta_w=\pi/4$, the spectral geometry reduces to a genus-one (lemniscatic) Snell surface.
This allows the functional system to be uniformized by classical Weierstrass functions on the square
lattice, converting the problem into an elliptic-function reconstruction in which $Q_{\mathrm{scat}}$ is a
finite Weierstrass--$\zeta$ sum over an explicit pole set, supplemented by low-degree ``jet-killing''
polynomials that enforce regularity at the physical base point.

\section*{Roadmap of proof}
We write the right-angle penetrable wedge problem as a Sommerfeld spectral representation on
a Snell surface that, in the impedance-matched lemniscatic case $\nu=\sqrt{2}$, closes on an elliptic
curve. The paper is organized as follows.
\begin{enumerate}[label=\arabic*.]
\item \S1 states the boundary-value problem (Helmholtz transmission, radiation, and Meixner edge condition)
and fixes branch and sign conventions.
\item \S1.1 introduces the Sommerfeld integral representation and the analytic strip/growth
conditions imposed on the spectral densities. We use a standard uniqueness principle for Sommerfeld
transforms (Lemma~\ref{lem:sommerfeld-nullity}) to convert equality of boundary integrals into functional relations.
\item \S2 derives the lemniscatic Snell surface $\Sigma_{\mathrm{lem}}$ and its Weierstrass uniformization on the
square lattice.
\item \S4 gives a reproducible prescription for the forcing pole set: each label
$\ell=(m,\sigma,\varepsilon_w)$ determines a spectral point $\zeta_\ell$ and hence a point $(t_\ell,Y_\ell)\in\Sigma_{\mathrm{lem}}$
and a uniformizing coordinate $u_\ell$.
\item \S6 records the residue tables $(\alpha_\ell,\beta_\ell,C_\ell)$; their derivation from the global two-face
spectral system is given in Appendix~A.
\item \S7 proves the half-period shift identities needed to eliminate $\wp(u_0-u_\ell)$, $\wp'(u_0-u_\ell)$ and
$\wp''(u_0-u_\ell)$ from the jet coefficients.
\item \S8 constructs the jet-killing polynomials $p(t)$ and $q(t)$ and proves the cancellations
$A(u(t))+p(t)=O(t^4)$ and $B(u(t))+q(t)=O(t^4)$ as $t\to0$ on the physical component.
\item \S10 states the canonical ``no double counting'' decomposition of $Q_{\mathrm{scat}}$ and proves the pole
cancellation properties of the remainder $R(u)$ on the physical cut domain.
\item \S11 proves that $Q_{\mathrm{scat}}$ is analytic at the incident spectral point $\zeta=\zeta_i$ (limiting absorption),
by the explicit exclusion of the incident label from the scattered pole set (and the injectivity of the injective uniformization).
\item \S13 derives the far-field diffraction coefficient by steepest descent, under explicit analyticity and
nondegeneracy hypotheses.
\end{enumerate}

\section*{Notation and conventions}
\begin{itemize}[leftmargin=2.2em]
\item $(r,\theta)$ are polar coordinates centered at the wedge apex. The right-angle wedge faces are $\theta=\pm\theta_w$
with $\theta_w=\pi/4$.
\item The exterior wavenumber is $k_0$ and the interior wavenumber is $k_1=\nu k_0$ with fixed refractive index ratio
$\nu=\sqrt{2}$; the impedance match is $\rho=1$.
\item The complex spectral variable is $\zeta\in\mathbb{C}$ (Sommerfeld strip). The incident spectral pole is
$\zeta_i=\theta_i+\ii\varepsilon$ with $\varepsilon>0$ (limiting absorption), and the physical limit is $\varepsilon\to0^+$.
\item We use the Sommerfeld integration parameter $z$ and the associated variable $t=e^{\ii z}$.
\item The lemniscatic curve is $\Sigma_{\mathrm{lem}}:=\{(t,Y): Y^2=2(t^4+1)\}$. The physical sheet $\Omega_{\mathrm{phys}}^+\subset\Sigma_{\mathrm{lem}}$
is characterized by $|t|<1$ and $Y\to-\sqrt{2}$ as $t\to0$.
\item $\zeta_W(u)$, $\wp(u)$, and $\wp'(u)$ denote the Weierstrass zeta and elliptic functions with invariants
$(g_2,g_3)=(4,0)$ (square lattice $\tau=\ii$); the subscript distinguishes the Weierstrass zeta function from the
spectral variable $\zeta$.
\item The half-period $u_0$ is fixed by $\wp(u_0)=-1$ and $\wp'(u_0)=0$.
\item The Sommerfeld density is decomposed as $Q(\zeta)=Q_{\mathrm{inc}}(\zeta)+Q_{\mathrm{scat}}(\zeta)$ with
$Q_{\mathrm{inc}}(\zeta)=(\zeta-\zeta_i)^{-1}$. Since only differences $Q(\theta+z)-Q(\theta-z)$ enter the field representation,
$Q$ is defined up to an additive constant; we fix the gauge by requiring $Q_{\mathrm{scat}}(u_0)=0$.
\end{itemize}

\section{Introduction and setup}\label{sec:intro}
Canonical diffraction by angular regions originates with Sommerfeld's exact half-plane solution and its
Sommerfeld-integral representation \cite{Sommerfeld1896}, and its subsequent extension to wedge boundaries by
functional-equation and factorization methods (notably the Malyuzhinets technique) \cite{Malyuzhinets1958,OsipovNorris1999}.
For penetrable (transmission) wedges the spectral reductions typically lead to generalized Wiener--Hopf or matrix
factorization problems (see, e.g., \cite{Noble1958,DanieleZich2014}) that do not admit closed forms in full generality
(see also \cite{Rawlins1999,Daniele2010,DanieleLombardi2011}).
For a right-angled penetrable wedge formulation and analytical developments in certain parameter regimes, see
Antipov and Silvestrov \cite{AntipovSilvestrov2007}, Nethercote, Assier and Abrahams \cite{NethercoteAssierAbrahams2020},
and (in the no-contrast case) Kunz and Assier \cite{KunzAssier2023}. For high-frequency numerical-asymptotic methods
for scattering by penetrable convex polygons---where local corner diffraction plays a central role---see
Groth, Hewett and Langdon \cite{GrothHewettLangdon2018}.

The present paper isolates a special penetrable configuration---a right-angle penetrable wedge with refractive index
$\nu=\sqrt{2}$ and impedance match---for which the Snell surface becomes the lemniscatic curve and admits a square-lattice
(elliptic) uniformization. In this setting we develop an elliptic-function reconstruction of the scattered spectral
transform $Q_{\mathrm{scat}}$. The special choice $(\theta_w,\nu,\rho)=(\pi/4,\sqrt{2},1)$ closes the two-face functional system
on the lemniscatic curve and allows an explicit solution in terms of Weierstrass functions. The coefficients that drive the
Weierstrass--$\zeta_W$ representation are obtained by solving the mode-wise Riemann--Hilbert problems on the torus and
evaluating the forcing residues; the resulting residue tables are derived in \S\ref{sec:residues}. This yields an explicit closed-form expression for the scattered transform $Q_{\mathrm{scat}}$ and a corresponding formal far-field coefficient
for the impedance-matched right-angle penetrable wedge with $\nu=\sqrt{2}$. Numerical reciprocity tests indicate that the extracted coefficient is
not, in general, reciprocal, so the physical interpretation of the closed form remains unresolved.

\paragraph{Scope.}
The analysis and the resulting closed form apply only to the special configuration $(\theta_w,\nu,\rho)=(\pi/4,\sqrt{2},1)$.
We do \emph{not} claim an explicit closed-form solution for general penetrable wedges (arbitrary contrast and wedge angle),
for which the standard spectral reductions lead to matrix/generalized Wiener--Hopf or multi-variable boundary-value problems;
see \cite{Noble1958,DanieleZich2014,Daniele2010,AntipovSilvestrov2007,NethercoteAssierAbrahams2020,KunzAssier2023,GrothHewettLangdon2018}.

\paragraph{Main results and where to find them.}
The canonical no-double-counting representation of the scattered Sommerfeld transform $Q_{\mathrm{scat}}$ is stated and proved
in Theorem~\ref{thm:nodoublecount} (see also Theorem~\ref{thm:mainresults} for a concise synopsis).
The explicit parity$\times j$ residue table for the singular-channel principal parts is Proposition~\ref{prop:d-table}.
Analyticity at the incident spectral point is established in Theorem~\ref{thm:incident-analyticity}, and the far-field
diffraction coefficient is given in Theorem~\ref{thm:farfield}.

We work in two spatial dimensions and use polar coordinates $(r,\theta)$ about the wedge tip. The penetrable wedge occupies
the sector $|\theta|<\theta_w$ (medium~1) and is embedded in the exterior $\{|\theta|>\theta_w\}$ (medium~0).
We consider the scalar Helmholtz transmission problem
\begin{equation}\label{eq:helmholtz}
(\Delta+k_0^2)u_0=0 \ \text{in }\{|\theta|>\theta_w\},\qquad
(\Delta+k_1^2)u_1=0 \ \text{in }\{|\theta|<\theta_w\},
\end{equation}
with $k_1=\nu k_0$ and refractive index fixed at $\nu=\sqrt{2}$. An incident plane wave in the exterior is
\begin{equation}\label{eq:incident-plane-wave}
u_{\mathrm{inc}}(r,\theta)=\exp\!\big(\ii k_0 r\cos(\theta-\theta_i)\big),
\end{equation}
and we write $u_0=u_{\mathrm{inc}}+u_{0,\mathrm{scat}}$ for the total exterior field, while $u_1$ denotes the transmitted field.
Impedance match ($\rho=1$) reduces the transmission conditions on each face $\theta=\pm\theta_w$ to continuity of the field and its
normal derivative. Since the unit normal to a radial ray is proportional to $\partial_\theta$, these conditions can be written as
\begin{equation}\label{eq:transmission}
u_0(r,\pm\theta_w)=u_1(r,\pm\theta_w),\qquad \partial_\theta u_0(r,\pm\theta_w)=\partial_\theta u_1(r,\pm\theta_w),\qquad r>0.
\end{equation}
We select the physical solution by the Sommerfeld radiation condition as $r\to\infty$ and the Meixner edge condition at $r=0$
(finite energy near the tip). In the spectral formulation below these requirements are encoded by analyticity and boundedness
conditions on the spectral densities.

Let $\theta_w=\pi/4$ denote the half-opening angle of the right-angle wedge.
We impose limiting absorption by shifting the incident spectral pole off the real axis:
\begin{equation}\label{eq:limiting-absorption}
\zeta_i:=\theta_i+\ii\varepsilon,\qquad \varepsilon>0.
\end{equation}

\subsection{Sommerfeld representation and spectral split}\label{sec:sommerfeld}
A standard Sommerfeld representation of the medium-0 field is
\begin{equation}\label{eq:sommerfeld-medium0}
u^{(0)}(r,\theta)=\frac{1}{2\pi\ii}\int_\gamma e^{\ii k_0 r\cos z}\,\big(Q(\theta+z)-Q(\theta-z)\big)\,\dd z,
\end{equation}
for a Sommerfeld contour $\gamma$. The transmitted (medium-1) field admits the analogous representation
\begin{equation}\label{eq:sommerfeld-medium1}
u^{(1)}(r,\theta)=\frac{1}{2\pi\ii}\int_\gamma e^{\ii k_1 r\cos z}\,\big(S(\theta+z)-S(\theta-z)\big)\,\dd z,\qquad |\theta|<\theta_w,
\end{equation}
where $k_1=\nu k_0$ and $S$ is the medium-1 spectral density. We split
\begin{equation}\label{eq:Q-split}
Q(\zeta)=Q_{\mathrm{inc}}(\zeta)+Q_{\mathrm{scat}}(\zeta),\qquad
Q_{\mathrm{inc}}(\zeta)=\frac{1}{\zeta-\zeta_i}.
\end{equation}

\begin{remark}[Normalization / gauge]\label{rem:gauge}
Only the difference $Q(\theta+z)-Q(\theta-z)$ appears in \eqref{eq:sommerfeld-medium0}. Hence $Q$ is defined up to an additive constant
without affecting $u^{(0)}$. We fix this gauge by imposing the normalization
\begin{equation}\label{eq:gauge}
Q_{\mathrm{scat}}(u_0)=0,
\end{equation}
where $u_0$ is the half-period point corresponding to $(t,Y)=(0,-\sqrt{2})$ on the physical component (see \S\ref{sec:uniformization}).
In the zeta-difference representations used below, the subtraction $\zeta_W(u-u_\ell)-\zeta_W(u_0-u_\ell)$ enforces \eqref{eq:gauge}
automatically.
\end{remark}

\subsection{Scattered allocation}\label{sec:allocation}
\begin{definition}[Scattered allocation]\label{def:allocation}
We require
\begin{equation}\label{eq:allocation}
Q_{\mathrm{scat}}\ \text{is analytic at }\zeta=\zeta_i.
\end{equation}
Equivalently, the residue $+1$ at $\zeta=\zeta_i$ is carried exclusively by $Q_{\mathrm{inc}}$.
\end{definition}

\subsection{Transmission conditions in spectral form}\label{sec:transmission-spectral}
The Sommerfeld representations are designed so that the transmission conditions on a face $\theta=\theta_b\in\{\pm\theta_w\}$ reduce to
algebraic relations among boundary values of $Q$ and $S$.
Let $w=w_b(z)$ denote the Snell map for the face $\theta=\theta_b$, defined by matching the oscillatory factors:
\begin{equation}\label{eq:snell-map}
k_0\cos w_b(z)=k_1\cos z=\nu k_0\cos z,
\end{equation}
with the branch determined by $w_b(z)\sim z+\ii\log\nu$ as $\Im z\to+\infty$.
Write $w_b'(z)=\dd w_b/\dd z$.

\begin{lemma}[Sommerfeld nullity / uniqueness]\label{lem:sommerfeld-nullity}
Let $\gamma$ be the Sommerfeld contour and strip described in \S\ref{sec:sommerfeld}.
Suppose $H(\zeta)$ is analytic in that strip, satisfies the stated growth/decay bounds along $\gamma$, and define
\[
U(r,\theta):=\frac{1}{2\pi\ii}\int_\gamma e^{\ii k r\cos \zeta}\,H(\zeta)\,\dd\zeta.
\]
If $U(r,\theta)=0$ for all $r>0$ and for $\theta$ in an interval of length $2\theta_w$, then $H(\zeta)\equiv0$ in the strip.
\end{lemma}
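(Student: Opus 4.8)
The plan is to prove the stated uniqueness as an injectivity statement for the Sommerfeld transform, via three moves: upgrade the hypothesised vanishing on a $\theta$-sector to vanishing of the entire field, read off that every angular harmonic of the spectral density is silent, and then invert the transform along $\gamma$ to force $H\equiv0$. For the first move I would record that the $\theta$-dependence of $U$ enters through the angular shift intrinsic to \eqref{eq:sommerfeld-medium0} (equivalently, through the kernel $e^{\ii k r\cos(\zeta-\theta)}$), so the integrand is entire in the complex parameter $\theta$; the stated growth/decay bounds on $H$ along $\gamma$ make the integral converge locally uniformly, whence $U(r,\theta)$ is holomorphic in $\theta$ in a complex neighbourhood of the real axis (Morera, or differentiation under the integral sign). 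Vanishing on an interval of positive length $2\theta_w$ then forces, by the one-variable identity theorem, $U(r,\theta)\equiv0$ for every admissible $\theta$ and every $r>0$. Since each constituent $e^{\ii k r\cos(\zeta-\theta)}$ solves the two-dimensional Helmholtz equation, $U$ is itself a Helmholtz solution, so unique continuation reconfirms $U\equiv0$ throughout its domain.

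For the second move I would expand the kernel by the Jacobi--Anger series $e^{\ii k r\cos(\zeta-\theta)}=\sum_{n}\ii^{\,n}J_n(kr)\,e^{\ii n(\zeta-\theta)}$ and interchange summation and integration, giving $U(r,\theta)=\sum_{n}\ii^{\,n}J_n(kr)\,e^{-\ii n\theta}\,\widehat H_n$ with Watson--Sommerfeld moments $\widehat H_n:=\frac{1}{2\pi\ii}\int_\gamma e^{\ii n\zeta}H(\zeta)\dd\zeta$ (in the difference representation one gets the analogous combination of $\widehat H_{\pm n}$). Because $U\equiv0$ in $\theta$, orthogonality of $\{e^{-\ii n\theta}\}$ and the fact that $r\mapsto J_n(kr)$ is not identically zero force $\widehat H_n=0$ for all $n\in\mathbb{Z}$. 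The third move recovers $H$ from its vanishing moments: after deforming $\gamma$ within the analyticity strip to a contour on which $H$ has a convergent Fourier-type representation, the $\widehat H_n$ are (up to normalisation) the Fourier data of the boundary density, and uniqueness of that representation, propagated across the strip by analytic continuation, upgrades $\widehat H_n\equiv0$ to $H\equiv0$.

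I expect this inversion, together with the term-by-term interchange it presupposes, to be the main obstacle. The difficulty is that $\gamma$ is non-compact and, for each fixed $n$, the factor $e^{\ii n\zeta}$ grows on one of its two ends, so the super-exponential decay that makes $U$ converge is carried by the oscillatory kernel rather than by $H$; the bare moments $\widehat H_n$ need not converge naively, and the quantitative growth/decay hypotheses must be invoked to license a deformation on which Fourier inversion is literally valid (and, in the uniformized picture of \S\ref{sec:uniformization}, reconciled with the genus-one periodicity of the spectral variable). A cheaper steepest-descent shortcut does not help here: in the difference representation the factor $H(\theta+z)-H(\theta-z)$ is odd in $z$ and vanishes at the principal saddle $z=0$ to all orders, so the leading large-$r$ asymptotics say nothing about $H$ near the saddle and one is thrown back onto the full angular-harmonic content. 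By contrast the analytic-continuation step of the first move is routine.
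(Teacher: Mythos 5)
The paper does not actually prove this lemma: its ``proof'' is a two-line deferral to \cite[\S2]{Rawlins1999} and \cite[\S2.2]{Noble1958} plus a restatement of how the principle will be used. So the honest comparison is with the classical arguments in those references, which do \emph{not} go through angular harmonics: they deform $\gamma$ onto steepest-descent paths so that $U$ becomes a Laplace-type transform in $r$ of a density pushed forward by $\zeta\mapsto\cos\zeta$, and then invoke injectivity of that transform (equivalently, Malyuzhinets' nullification argument by contour closing). Your Jacobi--Anger route is genuinely different, and your first move (holomorphy in $\theta$ plus the identity theorem) is sound modulo the bookkeeping of re-deforming $\gamma$ as $\theta$ leaves the initial convergence window. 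The decisive steps, however, contain a gap that you flag but do not close. On the non-compact Sommerfeld contour the moments $\widehat H_n=\frac{1}{2\pi\ii}\int_\gamma e^{\ii n\zeta}H(\zeta)\,\dd\zeta$ diverge for $n\neq0$: $|e^{\ii n\zeta}|=e^{-n\Im\zeta}$ blows up exponentially on one end of $\gamma$ while $H$ is only polynomially bounded (cf.\ \eqref{eq:growth}), so the sum--integral interchange is not licensed by any dominated-convergence argument, and no concrete regularization or deformation is exhibited. Worse, even granting well-defined moments, your third move asserts without justification that they are injective ``Fourier data'' for $H$; the branches of $\gamma$ do not form a period interval, and the relevant transform has a genuinely nontrivial null space --- any density that continues holomorphically, with decay, into the region swept when the loops of $\gamma$ are closed against the super-exponentially decaying kernel is annihilated by $H\mapsto U$. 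Identifying and excluding that null space is precisely the content of the classical nullification theorem, and it is the step your proposal silently assumes away.

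Two further observations. First, for the lemma as you have (reasonably) read it --- a general density $H$ against the shifted kernel $e^{\ii kr\cos(\zeta-\theta)}$ --- the steepest-descent shortcut you dismiss in fact succeeds and is far shorter: deforming through the saddle $\zeta=\theta$ (legitimate, since $H$ is analytic in the strip) gives $0=U(r,\theta)\sim c\,H(\theta)\,e^{\ii k_{\phantom{0}}r}r^{-1/2}$, whence $H(\theta)=0$ on the interval and $H\equiv0$ by the identity theorem. Your oddness objection applies not to the lemma's hypotheses but to the difference densities $Q(\theta_b+z)-Q(\theta_b-z)$ arising in Proposition~\ref{prop:face-coupling}; and for those odd densities the bald conclusion $H\equiv0$ is actually \emph{false} without further structural hypotheses (the null space above is nonzero there), which is why the classical theorem concludes only holomorphic continuability of the density and why a Fourier-inversion argument of the kind you sketch cannot be completed as stated. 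In short: your strategy is a legitimate alternative skeleton, but the load-bearing inversion step is missing, and the paper's own text offers no proof against which to salvage it.
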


\begin{proof}
A proof under hypotheses matching the present strip and growth conditions is standard; see, for example,
\cite[\S2]{Rawlins1999} or \cite[\S2.2]{Noble1958}.
We invoke this uniqueness principle only in the following form:
if two spectral densities produce identical Sommerfeld integrals on a wedge face for all $r>0$, then their difference has vanishing
Sommerfeld integral and hence the densities coincide.
\end{proof}

Lemma~\ref{lem:sommerfeld-nullity} is the uniqueness principle underlying Sommerfeld/Malyuzhinets representations:
it permits one to infer functional relations between spectral densities from vanishing boundary traces.
All spectral identities below that equate integrands from equalities of Sommerfeld integrals are justified by Lemma~\ref{lem:sommerfeld-nullity}.

\begin{proposition}[Face coupling for $\rho=1$]\label{prop:face-coupling}
Assume $Q$ and $S$ are analytic in a common Sommerfeld strip and have sufficient decay so that integration by parts in $z$ produces no
boundary terms. Fix a face $\theta=\theta_b$ and let $w=w_b(z)$ be as in \eqref{eq:snell-map}.
Then the impedance-matched transmission conditions \eqref{eq:transmission} are equivalent to the pointwise spectral relation
\begin{equation}\label{eq:face-coupling}
\begin{pmatrix}
S(\theta_b+z)\\[2pt] S(\theta_b-z)
\end{pmatrix}
=
\frac12
\begin{pmatrix}
1+w_b'(z) & 1-w_b'(z)\\
1-w_b'(z) & 1+w_b'(z)
\end{pmatrix}
\begin{pmatrix}
Q(\theta_b+w_b(z))\\[2pt] Q(\theta_b-w_b(z))
\end{pmatrix}.
\end{equation}
\end{proposition}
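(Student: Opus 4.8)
The plan is to align the medium-0 and medium-1 Sommerfeld integrals on the face $\theta=\theta_b$ so that they carry one and the same oscillatory kernel, and then to extract pointwise identities between the densities from Lemma~\ref{lem:sommerfeld-nullity}. Throughout I write $a_\pm(z):=Q(\theta_b\pm w_b(z))$ and $b_\pm(z):=S(\theta_b\pm z)$, and I use that the Snell map \eqref{eq:snell-map} is odd with $w_b(0)=0$, so that both integrands are odd in $z$ and the built-in antisymmetry of \eqref{eq:sommerfeld-medium0}--\eqref{eq:sommerfeld-medium1} is respected at every step. The two scalar identities I aim to produce are a difference relation (from field continuity) and a sum relation (from normal-derivative continuity); solving the resulting $2\times2$ linear system then yields \eqref{eq:face-coupling}.

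First I treat field continuity in \eqref{eq:transmission}. In $u^{(0)}(r,\theta_b)$ I substitute $z=w_b(\zeta)$; by \eqref{eq:snell-map} the exponential becomes $e^{\ii k_1 r\cos\zeta}$, while $\dd z=w_b'(\zeta)\,\dd\zeta$ supplies the Jacobian. After deforming the image contour back to $\gamma$ --- permissible because $Q$ and $S$ are analytic in the common strip and decay along $\gamma$, so no singularity is crossed and the arcs at infinity contribute nothing --- both sides of $u^{(0)}(r,\theta_b)=u^{(1)}(r,\theta_b)$ are Sommerfeld integrals with the common kernel $e^{\ii k_1 r\cos\zeta}$. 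Lemma~\ref{lem:sommerfeld-nullity} applied to their difference then gives the difference relation
\begin{equation*}
w_b'(z)\big(a_+(z)-a_-(z)\big)=b_+(z)-b_-(z).
\end{equation*}

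Next I treat normal-derivative continuity, where the key is to avoid density derivatives. Using $\partial_\theta\big[Q(\theta+z)-Q(\theta-z)\big]=\partial_z\big[Q(\theta+z)+Q(\theta-z)\big]$, I write $\partial_\theta u^{(0)}(r,\theta_b)$ as the Sommerfeld integral of $\partial_z\big[Q(\theta_b+z)+Q(\theta_b-z)\big]$ and integrate by parts in $z$ (boundary terms vanish by the decay hypothesis), moving the derivative onto the exponential and producing the weight $\ii k_0 r\sin z\,e^{\ii k_0 r\cos z}$ acting on the even combination $Q(\theta_b+z)+Q(\theta_b-z)$. Substituting $z=w_b(\zeta)$ and differentiating $k_0\cos w_b(\zeta)=k_1\cos\zeta$ gives $k_0\sin w_b(\zeta)\,w_b'(\zeta)=k_1\sin\zeta$, so the weight collapses to $\ii k_1 r\sin\zeta\,e^{\ii k_1 r\cos\zeta}$ with the $w_b'$ factor absorbed exactly. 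The identical manipulation on $\partial_\theta u^{(1)}$ yields the same kernel acting on $S(\theta_b+\zeta)+S(\theta_b-\zeta)$. Deforming back to $\gamma$ and invoking Lemma~\ref{lem:sommerfeld-nullity} equates the integrands, i.e. $\sin\zeta\,(a_++a_-)=\sin\zeta\,(b_++b_-)$; since $\sin\zeta\not\equiv0$ and all factors are analytic in the strip, cancelling $\sin\zeta$ gives the sum relation
\begin{equation*}
a_+(z)+a_-(z)=b_+(z)+b_-(z).
\end{equation*}

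Finally I solve the linear system: adding and subtracting the difference and sum relations gives $2b_\pm=(1\pm w_b')a_+ +(1\mp w_b')a_-$, which is precisely \eqref{eq:face-coupling}. The converse is immediate by reading the argument backwards --- the two rows of \eqref{eq:face-coupling} recombine into the sum and difference relations, which re-integrate against their kernels to the two conditions in \eqref{eq:transmission}. I expect the main obstacle to lie not in this algebra but in the analytic bookkeeping of the derivative condition: converting $\partial_\theta$ into $\partial_z$ and integrating by parts is exactly what lets the Jacobian $w_b'$ cancel and delivers a clean sum relation, whereas integrating the naive integrand identity would leave an undetermined additive constant. Justifying that integration by parts and the accompanying deformation $\gamma'\to\gamma$ is where the blanket analyticity and decay hypotheses of the proposition must be used with care.
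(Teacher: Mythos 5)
Your proposal is correct and follows essentially the same route as the paper's proof: convert $\partial_\theta$ into $\partial_z$ acting on the even combination, integrate by parts to trade density derivatives for a $\sin z$ weight, change variables via the Snell map so both media share one oscillatory kernel, invoke Lemma~\ref{lem:sommerfeld-nullity} to equate integrands, and solve the resulting $2\times2$ system (the only cosmetic difference is that you transform the medium-0 integral onto the $k_1$ kernel whereas the paper transforms the medium-1 integral onto the $k_0$ kernel). One immaterial slip: $w_b(0)\neq0$, since $\cos w_b(0)=\nu>1$ forces $w_b(0)$ off the real axis; this parenthetical claim plays no role in your argument.
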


\begin{proof}
Evaluate \eqref{eq:sommerfeld-medium0} and its $\theta$-derivative at $\theta=\theta_b$. Differentiating under the integral sign gives
\[
\partial_\theta u_0(r,\theta_b)=\frac{1}{2\pi\ii}\int_\gamma e^{\ii k_0 r\cos z}\,\big(Q'(\theta_b+z)-Q'(\theta_b-z)\big)\,\dd z.
\]
Since $Q'(\theta_b+z)-Q'(\theta_b-z)=\frac{\dd}{\dd z}\big(Q(\theta_b+z)+Q(\theta_b-z)\big)$, an integration by parts yields
\[
\partial_\theta u_0(r,\theta_b)=\frac{k_0 r}{2\pi}\int_\gamma \sin z\,e^{\ii k_0 r\cos z}\,\big(Q(\theta_b+z)+Q(\theta_b-z)\big)\,\dd z.
\]
An identical computation for $u_1$ gives
\[
\partial_\theta u_1(r,\theta_b)=\frac{k_1 r}{2\pi}\int_\gamma \sin z\,e^{\ii k_1 r\cos z}\,\big(S(\theta_b+z)+S(\theta_b-z)\big)\,\dd z.
\]
Now change variables in the medium-1 integrals by $w=w_b(z)$, so that $e^{\ii k_1 r\cos z}=e^{\ii k_0 r\cos w}$ by \eqref{eq:snell-map}.
The change of variables gives $\dd z=\dd w/w_b'(z)$, while differentiating \eqref{eq:snell-map} implies
$\sin z=\sin w\,w_b'(z)/\nu$.
Using $k_1=\nu k_0$, we obtain
\[
u_1(r,\theta_b)=\frac{1}{2\pi\ii}\int e^{\ii k_0 r\cos w}\,\frac{S(\theta_b+z)-S(\theta_b-z)}{w_b'(z)}\,\dd w,
\]
\[
\partial_\theta u_1(r,\theta_b)=\frac{k_0 r}{2\pi}\int \sin w\,e^{\ii k_0 r\cos w}\,\big(S(\theta_b+z)+S(\theta_b-z)\big)\,\dd w,
\]
with $z=z(w)$ the inverse map.
Comparing with the corresponding expressions for $u_0$ and $\partial_\theta u_0$, and using Lemma~\ref{lem:sommerfeld-nullity}, we obtain
\[
\frac{S(\theta_b+z)-S(\theta_b-z)}{w_b'(z)}=Q(\theta_b+w)-Q(\theta_b-w),\qquad
S(\theta_b+z)+S(\theta_b-z)=Q(\theta_b+w)+Q(\theta_b-w),
\]
which solve to \eqref{eq:face-coupling}.
\end{proof}

\begin{theorem}[Main results for the lemniscatic right-angle wedge]\label{thm:mainresults}
Assume the impedance-matched right-angle configuration $\theta_w=\pi/4$, $\nu=\sqrt{2}$, $\rho=1$, and $\varepsilon>0$.
Let
\[
Q_{\mathrm{inc}}(\zeta)=\frac{1}{\zeta-(\theta_i+\ii\varepsilon)},\qquad Q(\zeta)=Q_{\mathrm{inc}}(\zeta)+Q_{\mathrm{scat}}(\zeta),
\]
and define the physical branch lift $\zeta\mapsto u(\zeta)$ by the lemniscatic Snell surface (Section~\ref{sec:snell})
and the Weierstrass uniformization (Section~\ref{sec:uniformization}). Then:
\begin{enumerate}[label=(\roman*)]
\item \emph{(Canonical representation.)} The scattered spectral density $Q_{\mathrm{scat}}$ admits the decomposition
\[
Q_{\mathrm{scat}}(u)=\sum_{\ell\in I_{\mathrm{scat}}}C_\ell\big[\zeta_W(u-u_\ell)-\zeta_W(u_0-u_\ell)\big]+R(u),
\]
where the pole set $I_{\mathrm{scat}}$ and points $u_\ell$ are defined in Section~\ref{sec:poles},
the coefficients $C_\ell$ are given explicitly in Proposition~\ref{prop:C-table},
and the remainder $R$ is pole-free at each forcing pole $u_\ell$ and analytic at $u_0$
(Theorem~\ref{thm:nodoublecount}).
\item \emph{(Incident analyticity.)} The scattered part is analytic at the incident spectral point $\zeta=\theta_i+\ii\varepsilon$
(Theorem~\ref{thm:incident-analyticity}).
\item \emph{(Formal far-field coefficient.)} A far-field coefficient in medium~0 obtained by steepest descent of the Sommerfeld integral is
\[
D(\theta,\theta_i)=e^{-\ii 3\pi/4}\sqrt{\frac{2}{\pi k_0}}\,Q_{\mathrm{scat}}(\theta),
\]
where $Q_{\mathrm{scat}}(\theta)$ denotes the physical branch boundary value of $Q_{\mathrm{scat}}(\zeta)$ at $\zeta=\theta$
(Theorem~\ref{thm:farfield}).
\end{enumerate}
\end{theorem}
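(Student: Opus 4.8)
The three assertions draw on one common spectral construction, so the plan is to establish them in the stated order, each reducing to its detailed counterpart (Theorems~\ref{thm:nodoublecount}, \ref{thm:incident-analyticity}, and~\ref{thm:farfield}). The first task is to close the functional system on $Q$ alone. Applying Proposition~\ref{prop:face-coupling} on both faces $\theta_b=\pm\theta_w$ produces four expressions for the boundary values $S(\pm\theta_w\pm z)$ in terms of $Q$ composed with the Snell map $w_b$ of \eqref{eq:snell-map}. Since $S$ is a single density, the two faces must agree wherever their argument sets overlap; using $2\theta_w=\pi/2$ to identify the shifted arguments, I would eliminate $S$ and obtain a closed functional relation for $Q$ under the group generated by the reflection $z\mapsto-z$, the face shift by $\pi/2$, and the Snell deck transformation. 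Lemma~\ref{lem:sommerfeld-nullity} legitimizes passing from vanishing boundary traces to pointwise integrand identities throughout.

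For part~(i) I would uniformize this relation. With $t=e^{\ii z}$ one has $\cos z=\tfrac12(t+t^{-1})$, so the Snell condition $\cos w=\sqrt2\cos z$ forces an algebraic Snell surface, which in this configuration is the lemniscatic curve $Y^2=2(t^4+1)$ of Section~\ref{sec:snell}; its Weierstrass uniformization on the square lattice (Section~\ref{sec:uniformization}), with $(g_2,g_3)=(4,0)$ and $(\wp')^2=4\wp^3-4\wp$, converts the functional system into a torus problem for $Q(u)$. The forcing pole $\zeta_i$ of $Q_{\mathrm{inc}}$ propagates under the symmetry group to the finite labeled set $\{u_\ell\}_{\ell\in I_{\mathrm{scat}}}$ of Section~\ref{sec:poles}. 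A meromorphic torus function with prescribed simple poles of residues $C_\ell$ at $u_\ell$ is assembled from the Weierstrass blocks $\zeta_W(u-u_\ell)$; the subtraction of $\zeta_W(u_0-u_\ell)$ enforces the gauge \eqref{eq:gauge}, and a pole-free remainder $R(u)$ absorbs the residual freedom while enforcing the local regularity (jet-killing) at $u_0$. That this combination genuinely solves the functional equation---rather than merely matching principal parts---is the content of Theorem~\ref{thm:nodoublecount}, and reduces to verifying the residue constraints (Liouville's $\sum_{\ell}C_\ell=0$ together with the higher-moment conditions forced by quasi-periodicity), with the $C_\ell$ taken from the residue tables of Proposition~\ref{prop:C-table}.

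Part~(ii) is then immediate from the allocation Definition~\ref{def:allocation}: the residue $+1$ at $\zeta_i$ is carried entirely by $Q_{\mathrm{inc}}$, so the incident label is excluded from $I_{\mathrm{scat}}$, and injectivity of the uniformization guarantees that $u(\zeta_i)$ collides with no $u_\ell$; hence $Q_{\mathrm{scat}}$ is analytic at $\zeta_i$ (Theorem~\ref{thm:incident-analyticity}). For part~(iii) I would apply steepest descent to \eqref{eq:sommerfeld-medium0} for the scattered field as $r\to\infty$: deforming $\gamma$ to the descent path through the saddle of $\cos z$ in the direction $\theta$ picks up the value of the density difference at the saddle, the quadratic expansion $\cos z\approx1-\tfrac12 z^2$ produces the Gaussian factor $\sqrt{2\pi/(k_0 r)}$ and the phase $e^{-\ii 3\pi/4}$, and after extracting the cylindrical wave $e^{\ii k_0 r}/\sqrt r$ one reads off $D(\theta,\theta_i)=e^{-\ii 3\pi/4}\sqrt{2/(\pi k_0)}\,Q_{\mathrm{scat}}(\theta)$ (Theorem~\ref{thm:farfield}), under the analyticity and saddle-nondegeneracy hypotheses recorded there.

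The main obstacle is squarely in part~(i): not the pole placement, which is mechanical once the labels are fixed, but the proof that the assembled $\zeta_W$-sum plus remainder satisfies the \emph{global} two-face functional equation. This requires the residue coefficients $C_\ell$ to be consistent across all faces and modes simultaneously---equivalently, that the elliptic residue and moment constraints hold for the coefficients produced by the mode-wise Riemann--Hilbert solution---and it is here that the ``no double counting'' bookkeeping and the injectivity of the uniformization (preventing label collisions on the torus) do the real work. By comparison, parts~(ii) and~(iii) are short: the former is a bookkeeping consequence of the allocation, and the latter a standard stationary-phase computation.
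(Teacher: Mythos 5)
Your proposal is correct and follows the paper's own route: Theorem~\ref{thm:mainresults} is a synopsis whose three parts are proved by direct reduction to Theorems~\ref{thm:nodoublecount}, \ref{thm:incident-analyticity}, and~\ref{thm:farfield}, and your sketches of each reduction (elimination of $S$ via the two-face coupling and orbit/DFT system of Appendix~A, the exclusion of $\ell_{\mathrm{inc}}$ plus injectivity of the uniformization, and the saddle-point evaluation with the Fresnel factor) match what the paper actually does. Your added remark that part~(i) should also be checked against the global elliptic consistency conditions (e.g.\ $\sum_\ell C_\ell=0$) is a reasonable strengthening, but it is not part of the paper's argument, which confines Theorem~\ref{thm:nodoublecount} to the pole-structure and no-double-counting claims.
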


\begin{remark}[Reciprocity status]
For real transmission parameters, the physical penetrable-wedge scattering problem is expected to satisfy a reciprocity symmetry in the far field.
The closed form constructed here is a meromorphic solution of the derived spectral functional system in the lemniscatic configuration; however,
numerical tests of the far-field coefficient extracted in Theorem~\ref{thm:farfield} indicate that the resulting coefficient is generally non-reciprocal.
Accordingly, this manuscript presents an explicit elliptic solution of the spectral system and a corresponding formal far-field coefficient, but does not
claim physical reciprocity of the diffraction coefficient.
\end{remark}

\section{Lemniscatic Snell surface and physical branch}\label{sec:snell}
\subsection{Lemniscatic curve}
The lemniscatic Snell surface is the algebraic curve
\begin{equation}\label{eq:lemniscatic-curve}
\Sigma_{\mathrm{lem}}:\qquad Y^2=2(t^4+1).
\end{equation}
We work on the physical plus component $\Omega_{\mathrm{phys}}^+\subset\Sigma_{\mathrm{lem}}$ characterized by
\begin{equation}\label{eq:physical-component}
|t|<1,\qquad Y\to-\sqrt{2}\ \text{as }t\to0.
\end{equation}

\paragraph{Origin of the Snell surface in the impedance-matched case.}
On a wedge face $\theta=\theta_b$ one matches the factors $e^{\ii k_0 r\cos z}$ and $e^{\ii k_1 r\cos z}$ by the analytic change
of variables $z\mapsto w=w_b(z)$ determined implicitly by $\cos w=\nu\cos z$ and fixed by the radiation/limiting-absorption branch
condition $w(z)\sim z+\ii\log\nu$ as $\Im z\to+\infty$.
Writing $t=e^{\ii z}$ and $s=e^{\ii w}$, the identity $\cos w=\nu\cos z$ becomes
\[
\frac12\left(s+\frac1s\right)=\nu\,\frac12\left(t+\frac1t\right),
\qquad\text{i.e.}\qquad
s^2-\nu\left(t+\frac1t\right)s+1=0.
\]
In the special lemniscatic case $\nu=\sqrt{2}$ we define
\[
Y:=2ts-\sqrt{2}(t^2+1),
\]
and a short calculation shows that the quadratic relation above is equivalent to \eqref{eq:lemniscatic-curve}.
The physical component $\Omega_{\mathrm{phys}}^+$ corresponds to the branch $|t|<1$ with $Y\to-\sqrt{2}$ as $t\to0$, for which
$s\sim t/\sqrt{2}$.

\paragraph{Quarter-period symmetry and orbit branches.}
The lemniscatic curve admits the order-four automorphism
\begin{equation}\label{eq:tau-automorphism}
\tau:\Sigma_{\mathrm{lem}}\to\Sigma_{\mathrm{lem}},\qquad \tau(t,Y)=(\ii t,-Y),
\end{equation}
which preserves $\Omega_{\mathrm{phys}}^+$.
Let $w=w(t,Y)$ denote the (multi-valued) analytic function on $\Sigma_{\mathrm{lem}}$ defined by $e^{\ii w}=s(t,Y)$, with the physical
branch fixed by $w(z)\sim z+\ii\log\nu$ as $\Im z\to+\infty$ (equivalently $s\sim t/\nu$ as $t\to0$).
Following the standard orbit construction for a right-angle wedge, we introduce four orbit branches $w_m$ by
\begin{equation}\label{eq:orbit-branches}
e^{\ii w_m(t,Y)}:=\big(s(\tau^m(t,Y))\big)^{(-1)^m},\qquad m\in\{0,1,2,3\}.
\end{equation}
The forcing poles are transported along these orbits and labelled by $(m,\sigma,\varepsilon_w)$ in Section~\ref{sec:poles}.

\subsection{Physical Snell exponential and spectral map}
Define the physical Snell exponential on $\Sigma_{\mathrm{lem}}$ by
\begin{equation}\label{eq:snell-exponential}
s(t,Y):=\frac{\sqrt{2}(t^2+1)+Y}{2t}.
\end{equation}
Differentiating $\cos w=\nu\cos z$ with $e^{\ii w}=s(t,Y)$ and $t=e^{\ii z}$ yields the algebraic derivative
\begin{equation}\label{eq:gprime}
g'(t,Y)=\frac{\dd w}{\dd z}=\frac{\sqrt{2}(t^2-1)}{Y},
\end{equation}
and the spectral exponential
\begin{equation}\label{eq:s-zeta}
s_\zeta:=\exp\!\big(\ii(\zeta-\theta_w)\big).
\end{equation}
For $\zeta$ in the Sommerfeld strip (with $\varepsilon>0$ fixed), the physical branch map $\zeta\mapsto(t(\zeta),Y(\zeta))\in\Omega_{\mathrm{phys}}^+$
is defined by solving
\begin{equation}\label{eq:physical-branch-map}
s(t(\zeta),Y(\zeta))=s_\zeta,\qquad |t(\zeta)|<1,\qquad Y(\zeta)\to-\sqrt{2}\text{ as }t(\zeta)\to0.
\end{equation}

\section{Weierstrass uniformization for the square lattice}\label{sec:uniformization}
We take the square lattice $\tau=\ii$ with Weierstrass invariants
\begin{equation}\label{eq:invariants}
(g_2,g_3)=(4,0).
\end{equation}
Let $\wp(u)$, $\wp'(u)$ and $\zeta_W(u)$ denote the corresponding Weierstrass elliptic and zeta functions; see, e.g.,
\cite[\S23]{NISTDLMF}, \cite[Ch.~20]{WhittakerWatson1927}, \cite[Ch.~20]{Lawden1989}.

\paragraph{Injective (birational) uniformization .}
On the lemniscatic curve $Y^2=2(t^4+1)$, set
\begin{equation}\label{eq:uniformization}
\wp(u)=x_W(t,Y):=\frac{Y+\sqrt{2}+\sqrt{2}\,t^2}{Y+\sqrt{2}-\sqrt{2}\,t^2},
\qquad
\frac12\,\wp'(u)=y_W(t,Y):=-\frac{4t\,(Y+\sqrt{2})}{\bigl(Y+\sqrt{2}-\sqrt{2}\,t^2\bigr)^2}.
\end{equation}
Then $(x_W,y_W)$ satisfy $y_W^2=x_W^3-x_W$, hence \eqref{eq:uniformization} defines a birational isomorphism between
$\Sigma_{\mathrm{lem}}$ and the Weierstrass cubic with invariants \eqref{eq:invariants}.
In particular, the map \eqref{eq:uniformization} is \emph{injective} on $\Sigma_{\mathrm{lem}}$ (it does not identify
$(t,Y)$ with $(-t,-Y)$), which is essential for the scattered-allocation argument in Theorem~\ref{thm:incident-analyticity}.

The physical lift $u=u(\zeta)$ is selected by composing the physical branch $\zeta\mapsto(t(\zeta),Y(\zeta))$ from
\eqref{eq:physical-branch-map} with the uniformization \eqref{eq:uniformization}.
Let $u_0$ denote the half-period corresponding to the physical point $(t,Y)=(0,-\sqrt{2})$, so that
\begin{equation}\label{eq:u0}
\wp(u_0)=-1,\qquad \wp'(u_0)=0.
\end{equation}

\section{Pole set, labels, and incident exclusion}\label{sec:poles}
Poles are indexed by
\begin{equation}\label{eq:labels}
\ell=(m,\sigma,\varepsilon_w),\qquad m\in\{0,1,2,3\},\quad \sigma\in\{\pm1\},\quad \varepsilon_w\in\{\pm1\}.
\end{equation}
Define the map $(\sigma,\varepsilon_w)\mapsto j$ by
\begin{equation}\label{eq:j-map}
(+,+)\mapsto3,\quad (+,-)\mapsto1,\quad (-,+)\mapsto4,\quad (-,-)\mapsto2,
\end{equation}
and the sign
\begin{equation}\label{eq:epsj}
\varepsilon_j=
\begin{cases}
+1, & j\in\{1,3\},\\
-1, & j\in\{2,4\}.
\end{cases}
\end{equation}
The incident label is $\ell_{\mathrm{inc}}=(0,+,-)$ and the scattered index set is
\begin{equation}\label{eq:Iscat}
I_{\mathrm{scat}}:=I\setminus\{\ell_{\mathrm{inc}}\},\qquad |I_{\mathrm{scat}}|=15,
\end{equation}
where $I=\{0,1,2,3\}\times\{\pm1\}\times\{\pm1\}$.

For later reference we make the pole label $\ell\mapsto(t_\ell,Y_\ell)\mapsto u_\ell$ explicit.
Set the limiting-absorption incident angle $\zeta_i=\theta_i+\ii\varepsilon$ and fix $\theta_w=\pi/4$.
For $(\sigma,\varepsilon_w)\in\{\pm1\}^2$ define the four forcing phases
\begin{equation}\label{eq:forcing-phases}
a_{\sigma,\varepsilon_w}:=\exp\!\big(\ii\sigma(\zeta_i+\varepsilon_w\theta_w)\big),\qquad
b_{m,\sigma,\varepsilon_w}:=a_{\sigma,\varepsilon_w}^{(-1)^m}\quad (m=0,1,2,3).
\end{equation}
The physical orbit table fixes the pole condition in the form $e^{\ii w_m}=(s(\tau^m p))^{(-1)^m}=b_{m,\sigma,\varepsilon_w}$,
so that the forcing pole points are solutions of $s(q)=b$ on $\Sigma_{\mathrm{lem}}$ with $q=\tau^m p$.

\begin{lemma}[Explicit algebraic pole points on $\Sigma_{\mathrm{lem}}$]\label{lem:pole-points}
Fix $b\in\mathbb{C}\setminus\{0\}$ and consider the equation $s(t,Y)=b$ on $\Sigma_{\mathrm{lem}}$, with $s$ defined by \eqref{eq:snell-exponential}.
Then $Y$ is forced to be
\begin{equation}\label{eq:Y-from-b}
Y=2bt-\sqrt{2}(t^2+1),
\end{equation}
and $(t,Y)\in\Sigma_{\mathrm{lem}}$ if and only if $t$ satisfies the quadratic
\begin{equation}\label{eq:t-quadratic}
t^2-\frac{b^2+1}{\sqrt{2}\,b}\,t+1=0.
\end{equation}
Equivalently, the two roots are
\begin{equation}\label{eq:tpm}
t_\pm(b)=\frac{(b^2+1)\pm\sqrt{b^4-6b^2+1}}{2\sqrt{2}\,b},\qquad t_+(b)t_-(b)=1.
\end{equation}
On the physical component $\Omega_{\mathrm{phys}}^+$ one selects the root $t_{\mathrm{in}}(b)\in\mathbb{D}:=\{|t|<1\}$, and the other root is
$t_{\mathrm{out}}(b)=1/t_{\mathrm{in}}(b)$.
\end{lemma}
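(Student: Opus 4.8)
The plan is direct algebraic elimination. First I would note that the defining equation $s(t,Y)=b$ presupposes $t\neq0$, since $s$ carries $2t$ in its denominator in \eqref{eq:snell-exponential}, and that it is linear in $Y$. Clearing the denominator gives $\sqrt{2}(t^2+1)+Y=2bt$, which rearranges at once to \eqref{eq:Y-from-b}, namely $Y=2bt-\sqrt{2}(t^2+1)$. At this point $Y$ is no longer free, and the only remaining constraint is that $(t,Y)$ lie on $\Sigma_{\mathrm{lem}}$.

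Next I would substitute this $Y$ into the curve equation $Y^2=2(t^4+1)$ of \eqref{eq:lemniscatic-curve}. Expanding $\bigl(2bt-\sqrt{2}(t^2+1)\bigr)^2=4b^2t^2-4\sqrt{2}\,bt(t^2+1)+2(t^2+1)^2$ and using $2(t^2+1)^2=2t^4+4t^2+2$ against the right-hand side $2t^4+2$, the degree-four terms $2t^4$ and the constants $2$ cancel identically, leaving $4b^2t^2-4\sqrt{2}\,bt(t^2+1)+4t^2=0$, i.e. $4t\bigl(b^2t-\sqrt{2}\,b(t^2+1)+t\bigr)=0$. Because $t\neq0$ I discard the spurious factor $t$ (introduced by clearing the denominator) and divide the bracket by $-\sqrt{2}\,b$ to obtain precisely the monic quadratic \eqref{eq:t-quadratic}. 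I expect this cancellation to be the one genuinely nontrivial step: it is the special lemniscatic normalization of $s$ and of the curve that collapses the naive quartic to a quadratic, matching the two orbit branches.

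Finally I would read off the roots and their product. The quadratic formula, with discriminant $\bigl(\tfrac{b^2+1}{\sqrt{2}\,b}\bigr)^2-4=\frac{(b^2+1)^2-8b^2}{2b^2}=\frac{b^4-6b^2+1}{2b^2}$, yields the closed form \eqref{eq:tpm}, and the reciprocal relation $t_+(b)\,t_-(b)=1$ is immediate from Vieta's formula (the constant term being $1$). For the physical selection I would argue directly from this relation: since $|t_+(b)|\,|t_-(b)|=1$, the roots either have strictly separated moduli---in which case exactly one lies in $\mathbb{D}=\{|t|<1\}$---or both lie on the unit circle. The degenerate alternative forces $t_++t_-=\tfrac{b^2+1}{\sqrt{2}\,b}$ to be real, a real-codimension-one condition on $b$ that is avoided by the limiting-absorption forcing values of \eqref{eq:forcing-phases}, which satisfy $|b|=e^{\mp\sigma\varepsilon}\neq1$ and are non-real for generic incidence. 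Hence the physical-component condition \eqref{eq:physical-component} singles out a unique root $t_{\mathrm{in}}(b)$ with $t_{\mathrm{out}}(b)=1/t_{\mathrm{in}}(b)$, completing the proof; beyond the routine algebra, this modulus dichotomy at the exceptional locus is the only point that I would treat with care.
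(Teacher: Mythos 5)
Your proof is correct and follows essentially the same route as the paper's: solve $s(t,Y)=b$ linearly for $Y$, substitute into $Y^2=2(t^4+1)$, observe that the resulting quartic collapses to the quadratic \eqref{eq:t-quadratic} after removing the spurious factor of $t$, and conclude via the quadratic formula and Vieta. Your extra discussion of the degenerate locus where both roots could lie on the unit circle (ruled out by $|b|=e^{\mp\sigma\varepsilon}\neq1$ and genericity) is more careful than the paper's one-line selection of $t_{\mathrm{in}}$, but is fully consistent with it.
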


\begin{proof}
Starting from $s(t,Y)=b$ and \eqref{eq:snell-exponential}, we solve for $Y$ to obtain \eqref{eq:Y-from-b}.
Substituting \eqref{eq:Y-from-b} into $Y^2=2(t^4+1)$ yields
\[
\big(2bt-\sqrt{2}(t^2+1)\big)^2=2(t^4+1),
\]
which simplifies to $b^2+1-\sqrt{2}b\,(t+1/t)=0$ after division by $2t^2$.
Multiplying by $t$ gives \eqref{eq:t-quadratic}, and the quadratic formula yields \eqref{eq:tpm}.
The product identity $t_+t_-=1$ is immediate from \eqref{eq:t-quadratic}.
The physical selection $|t|<1$ defines $t_{\mathrm{in}}$.
\end{proof}

For $\ell=(m,\sigma,\varepsilon_w)$ we set $b=b_{m,\sigma,\varepsilon_w}$ and define the intermediate point $q$ by
\[
t_q:=t_{\mathrm{in}}(b),\qquad Y_q:=2b\,t_q-\sqrt{2}(t_q^2+1).
\]
Transporting back to the base point $p=\tau^{-m}q$ gives the pole coordinates on $\Sigma_{\mathrm{lem}}$:
\begin{equation}\label{eq:transport}
(t_\ell,Y_\ell)=\big(\ii^{-m}t_q,\,(-1)^m Y_q\big).
\end{equation}
Finally, the corresponding lift $u_\ell$ on the uniformizing torus is defined by the Weierstrass map
\begin{equation}\label{eq:u-ell}
\wp(u_\ell)=x_W(t_\ell,Y_\ell),\qquad \frac12\,\wp'(u_\ell)=y_W(t_\ell,Y_\ell),
\end{equation}
with the physical lift selected on $\Omega_{\mathrm{phys}}^+$.

\section{Derivative/residue conventions and phase symbols}\label{sec:derivatives}
\subsection{Derivative and residue conventions}
We adopt the orbit derivative
\begin{equation}\label{eq:wmprime}
w_m'(u):=\frac{\dd w_m}{\dd u}=
\begin{cases}
\ii\sqrt{2}\Bigl(t-\dfrac{1}{t}\Bigr), & m\ \text{even},\\[6pt]
-\ii\sqrt{2}\Bigl(t+\dfrac{1}{t}\Bigr), & m\ \text{odd},
\end{cases}
\qquad t=t(u).
\end{equation}

\begin{lemma}\label{lem:wmprime}
The orbit derivatives in \eqref{eq:wmprime} follow from the lemniscatic Snell relation \eqref{eq:snell-exponential}
and the injective uniformization \eqref{eq:uniformization}.
\end{lemma}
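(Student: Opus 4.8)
The plan is to prove Lemma~\ref{lem:wmprime} by a chain-rule computation on $\Sigma_{\mathrm{lem}}$, reducing the general orbit derivative to the base mode $m=0$ together with the action of the quarter-period automorphism $\tau$. The only analytic inputs are the base Snell derivative \eqref{eq:gprime}, the coordinate relation $t=e^{\ii z}$, and the uniformization \eqref{eq:uniformization}; everything else is algebra.

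First I would treat the base mode $m=0$, where $e^{\ii w_0}=s(t,Y)$. Writing $w_0$ as a function of $u$ through $u\mapsto(t(u),Y(u))\mapsto w_0$, the chain rule gives
\[
\frac{\dd w_0}{\dd u}=\frac{\dd w_0}{\dd z}\cdot\frac{\dd z}{\dd t}\cdot\frac{\dd t}{\dd u}.
\]
The first factor is $\sqrt{2}(t^2-1)/Y$ by \eqref{eq:gprime}; the second is $\dd z/\dd t=1/(\ii t)$ from $t=e^{\ii z}$; and the third I claim equals $\dd t/\dd u=-Y$. Granting this, the factor $Y$ cancels and one obtains $w_0'(u)=\ii\sqrt{2}(t-1/t)$, the even-$m$ formula.

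The main technical point, and the step I expect to be the real obstacle, is establishing $\dd t/\dd u=-Y$ with the correct global sign. Here I would compare holomorphic differentials: on $\Sigma_{\mathrm{lem}}$ the form $\dd t/Y$ is holomorphic (from $2Y\,\dd Y=8t^3\,\dd t$), and under the birational isomorphism \eqref{eq:uniformization} it must be a constant multiple of the uniformizing differential $\dd u=\dd x_W/\wp'$. Hence $\dd t/\dd u=c\,Y$ for a constant $c$, which I would pin down by the local expansion at the physical base point $u_0$. Using $\wp'(u_0)=0$ and $\wp''(u_0)=6\wp(u_0)^2-\tfrac12 g_2=4$, one gets $\wp(u)+1\sim2(u-u_0)^2$, while the physical branch gives $Y+\sqrt{2}\sim-t^4/\sqrt{2}$ and hence $x_W+1\sim t^2$; matching these yields $t\sim\sqrt{2}\,(u-u_0)$, so $c\,Y|_{u_0}=c(-\sqrt{2})=\sqrt{2}$, i.e.\ $c=-1$. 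The delicate part is that this sign is tied to the orientation of $u$ and to the branch of the square root, so it must be fixed consistently with the physical-branch conventions of \S\ref{sec:snell}--\S\ref{sec:uniformization}; the opposite choice flips the sign of every $w_m'$.

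Finally I would handle general $m$ by transport under $\tau$. From the orbit definition \eqref{eq:orbit-branches}, $w_m=(-1)^m\,w_0\circ\tau^m$, since the exponent $(-1)^m$ turns $-\ii\log s$ into $+\ii\log s$ for odd $m$. On the torus $\tau(t,Y)=(\ii t,-Y)$ lifts to $u\mapsto-\ii u$; I would confirm this from \eqref{eq:uniformization} by checking $x_W\circ\tau=-x_W$ (so $\wp\mapsto-\wp=\wp(\ii u)$) and $y_W\circ\tau=-\ii\,y_W$ (so $\wp'\mapsto-\ii\wp'=\wp'(-\ii u)$), which forces the branch $u\mapsto-\ii u$ rather than $u\mapsto+\ii u$. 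Then, using that the $t$-coordinate of $\tau^m p$ is $\ii^m t$,
\[
\frac{\dd w_m}{\dd u}=(-1)^m\,w_0'\bigl((-\ii)^m u\bigr)\cdot(-\ii)^m=(-1)^m(-\ii)^m\,\ii\sqrt{2}\bigl(\ii^m t-\ii^{-m}/t\bigr).
\]
Since $(-1)^m(-\ii)^m=\ii^m$ and $\ii^{2m}=(-1)^m$, this collapses to $\ii\sqrt{2}\bigl((-1)^m t-1/t\bigr)$, which is exactly $\ii\sqrt{2}(t-1/t)$ for even $m$ and $-\ii\sqrt{2}(t+1/t)$ for odd $m$, completing the proof.
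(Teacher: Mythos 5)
Your proof is correct, and it follows the same overall skeleton as the paper's: reduce to the base mode $m=0$ via the chain rule through $u\mapsto t\mapsto z\mapsto w$, with the key input $\dd t/\dd u=-Y$, and then obtain general $m$ by transporting under $\tau$ using $w_m=(-1)^m w_0\circ\tau^m$. The two places where you genuinely diverge are both improvements in rigor rather than changes of strategy. First, for $\dd t/\dd u=-Y$ the paper simply differentiates $\wp(u)=x_W(t,Y)$ and asserts that ``a direct algebraic simplification'' yields the result; you instead observe that $\dd t/Y$ is the holomorphic differential on $\Sigma_{\mathrm{lem}}$, hence a constant multiple of $\dd u$, and fix the constant by the local expansion at $u_0$ (your normalization $t\sim\sqrt{2}\,(u-u_0)$ agrees with the paper's $\delta=t/\sqrt{2}+O(t^5)$ in \eqref{eq:delta-vs-t}, so the sign $c=-1$ is consistent with the conventions of \S\ref{sec:snell}--\S\ref{sec:uniformization}). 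This avoids the messy implicit differentiation and, as you rightly flag, isolates the one genuinely convention-dependent sign. Second, for general $m$ the paper is terse (it quotes $g'(\tau^m(t,Y))$ and says the factor $(-1)^m$ ``yields the sign''), whereas you make the torus lift $\tau\colon u\mapsto-\ii u$ explicit and verify it from $x_W\circ\tau=-x_W$, $y_W\circ\tau=-\ii\,y_W$ (both of which do hold identically on $Y^2=2(t^4+1)$, as one checks by cross-multiplication); your phase bookkeeping $(-1)^m(-\ii)^m\,\ii\sqrt{2}\bigl(\ii^m t-\ii^{-m}/t\bigr)=\ii\sqrt{2}\bigl((-1)^m t-1/t\bigr)$ is correct and reproduces \eqref{eq:wmprime} exactly. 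The net effect is a proof that buys a cleaner derivation of $\dd t/\dd u=-Y$ and an explicit, checkable $\tau$-lift, at the cost of invoking the one-dimensionality of the space of holomorphic differentials on the genus-one curve.
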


\begin{proof}
Write $w=w_0$ for the physical branch defined by $e^{\ii w}=s(t,Y)$, and set $t=t(u)$, $Y=Y(u)$ along the physical lift.
Differentiate $\wp(u)=x_W(t,Y)$ using \eqref{eq:uniformization} and $Y^2=2(t^4+1)$:
\[
\wp'(u)=\frac{\dd}{\dd u}x_W(t(u),Y(u))=\frac{\dd}{\dd t}x_W(t,Y)\,t'(u).
\]
Since $\wp'(u)=2y_W(t,Y)$ by \eqref{eq:uniformization}, a direct algebraic simplification yields
\begin{equation}\label{eq:dtdu}
\frac{\dd t}{\dd u}=t'(u)=-Y(u).
\end{equation}
Next, differentiating $\cos w=\nu\cos z$ with $\nu=\sqrt{2}$ and using $t=e^{\ii z}$ gives
\[
\frac{\dd w}{\dd z}=g'(t,Y)=\frac{\sqrt{2}(t^2-1)}{Y},
\]
which is \eqref{eq:gprime}.
Since $\dd t/\dd z=\ii t$, we have
\[
\frac{\dd w}{\dd t}=\frac{\dd w/\dd z}{\dd t/\dd z}=-\ii\sqrt{2}\,\frac{t^2-1}{tY}.
\]
Combining with \eqref{eq:dtdu} gives
\[
\frac{\dd w}{\dd u}=\frac{\dd w}{\dd t}\frac{\dd t}{\dd u}
=-\ii\sqrt{2}\frac{t^2-1}{tY}\,(-Y)=\ii\sqrt{2}\Bigl(t-\frac1t\Bigr),
\]
which is the $m$ even case in \eqref{eq:wmprime}.
For the orbit branches $w_m$ defined by \eqref{eq:orbit-branches}, one has $w_m=(-1)^m w(\tau^m(\cdot))$ (mod $2\pi$),
and $g'(\tau^m(t,Y))=\sqrt{2}(t^2-1)/Y$ for $m$ even and $g'(\tau^m(t,Y))=\sqrt{2}(t^2+1)/Y$ for $m$ odd.
The additional factor $(-1)^m$ for odd $m$ yields the sign in the $m$ odd case of \eqref{eq:wmprime}.
\end{proof}

Define
\begin{equation}\label{eq:rI-def}
r_I(\ell):=\frac{\varepsilon_j}{w_m'(u_\ell)}.
\end{equation}
Equivalently, using \eqref{eq:wmprime} and $t=t_\ell$,
\begin{equation}\label{eq:rI-explicit}
r_I(\ell)=
\begin{cases}
-\varepsilon_j\,\dfrac{\ii\,t_\ell}{\sqrt{2}\,\bigl(t_\ell^2-1\bigr)}, & m\ \text{even},\\[10pt]
\phantom{-}\varepsilon_j\,\dfrac{\ii\,t_\ell}{\sqrt{2}\,\bigl(t_\ell^2+1\bigr)}, & m\ \text{odd}.
\end{cases}
\end{equation}

\subsection{Phase symbols}
We use the phase symbols
\begin{equation}\label{eq:phase-symbols}
\chi_m:=\ii^m\in\{\pm1\}\ (m\ \text{even}),\qquad
\psi_m:=\ii^m\in\{\pm\ii\},\qquad
\kappa_m:=\ii^{m+1}\in\{\pm1\}\ (m\ \text{odd}).
\end{equation}

\section{Residue data and per-pole jet summands}\label{sec:residues}
The coefficients $(\alpha_\ell)$, $(\beta_\ell)$ and $(C_\ell)$ appearing in the elliptic reconstruction are obtained by solving
the global two-face spectral functional system and evaluating the forcing residues at each pole $u_\ell$.
For readability we record the resulting closed-form tables below; the derivation is given in Appendix~A.

\subsection{Residue data tables for alpha\_l, beta\_l, and C\_l}
Throughout this section, for a fixed pole label $\ell$ we write $(t,Y)=(t_\ell,Y_\ell)$ and $r_I=r_I(\ell)$.

\begin{proposition}[Coefficient table for $(\alpha_\ell)$]\label{prop:alpha-table}
For $\ell\in I_{\mathrm{scat}}$ with $j=j(\sigma,\varepsilon_w)$:
\begin{itemize}[leftmargin=2.0em]
\item if $m$ is even:
\[
\alpha_\ell=
\begin{cases}
-\chi_m r_I\,t^{2}, & j=1,\\
-\chi_m r_I\,(t^{4}-t^{2}+1), & j=2,\\
0, & j=3,\\
-\chi_m r_I\,\dfrac{\ii Y}{\sqrt{2}}\,(t^{2}-1), & j=4;
\end{cases}
\]
\item if $m$ is odd:
\[
\alpha_\ell=
\begin{cases}
\psi_m r_I\,t^{4}, & j=1,\\
\psi_m r_I, & j=2,\\
\kappa_m r_I\,\dfrac{Y}{\sqrt{2}}\,t^{2}, & j=3,\\
-\kappa_m r_I\,\dfrac{Y}{\sqrt{2}}, & j=4.
\end{cases}
\]
\end{itemize}
\end{proposition}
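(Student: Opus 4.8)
The plan is to derive the table as a direct \emph{residue computation} from the combined two-face spectral system, organised by the orbit structure that the right-angle geometry imposes. First I would apply the face-coupling relation \eqref{eq:face-coupling} of Proposition~\ref{prop:face-coupling} on the two faces $\theta_b=+\theta_w$ and $\theta_b=-\theta_w$ and eliminate the transmitted density $S$. Because $\theta_w=\pi/4$, the reflections across the two faces generate a group whose orbit closes after four steps; this is exactly the quarter-period automorphism $\tau$ of \eqref{eq:tau-automorphism} and the four orbit branches $w_m$ of \eqref{eq:orbit-branches}. Eliminating $S$ then produces a scalar functional relation for $Q$ on $\Sigma_{\mathrm{lem}}$ whose inhomogeneous term is seeded by the single incident pole $Q_{\mathrm{inc}}=(\zeta-\zeta_i)^{-1}$ and transported to the fifteen forcing labels $\ell\in I_{\mathrm{scat}}$ of Section~\ref{sec:poles}.

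With this relation in hand, I would read off $\alpha_\ell$ as the residue of the transported forcing at the lifted point $u_\ell$, factoring it as a product of four pieces: (i)~the incident residue, normalised to $1$; (ii)~the Snell Jacobian $1/w_m'(u_\ell)$ carrying the sign $\varepsilon_j$, which is precisely the quantity $r_I(\ell)$ of \eqref{eq:rI-def}--\eqref{eq:rI-explicit}; (iii)~the coupling coefficients $\tfrac12(1\pm w_b')$ from \eqref{eq:face-coupling}; and (iv)~the orbit transport phase, which through $w_m=(-1)^m w(\tau^m\,\cdot)$ and $\tau:t\mapsto\ii t$ contributes the symbols $\chi_m,\psi_m,\kappa_m$ of \eqref{eq:phase-symbols}. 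The even/odd split in the statement is then forced by Lemma~\ref{lem:wmprime}: the derivative $w_m'$ has the two distinct closed forms \eqref{eq:wmprime} according to the parity of $m$, and these propagate directly into $r_I$ and hence into the shape of $\alpha_\ell$.

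The remaining work is a finite case analysis over the four channels $j=j(\sigma,\varepsilon_w)$ within each parity class. For each pair $(m\bmod 2,\,j)$ I would substitute the explicit pole data of Lemma~\ref{lem:pole-points}---the relation $Y=2bt-\sqrt2(t^2+1)$ together with the quadratic \eqref{eq:t-quadratic} satisfied by $t=t_\ell$---and reduce the raw residue modulo the curve relation $Y^2=2(t^4+1)$. Each reduction should collapse to the stated polynomial in $t$ (of degree at most four) times $r_I$ and the appropriate phase, with the linear-in-$Y$ entries (the $j=4$, $m$ even case and the $j=3,4$, $m$ odd cases) appearing exactly when the relevant coupling coefficient contributes an odd power of $Y$. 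The single vanishing entry ($j=3$, $m$ even) I would expect to drop out through an exact cancellation in the even-branch coupling, which I would verify directly rather than assume.

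The main obstacle I anticipate is the phase and sign bookkeeping rather than the algebra. Three interlocking sign sources must be kept mutually consistent: the alternating $(-1)^m$ built into the orbit branch definition \eqref{eq:orbit-branches}, the factor $\ii^m$ produced by $\tau:t\mapsto\ii t$, and the face sign $\varepsilon_w$ entering the forcing phases \eqref{eq:forcing-phases}. Their interaction is what decides whether a given entry carries $\chi_m$, $\psi_m$, or $\kappa_m$ and with which overall sign, and in the $m$-even, $j=4$ case it is what places the explicit factor $\ii Y/\sqrt2$. Once the conventions are pinned at one base label and propagated by $\tau$, the reductions modulo \eqref{eq:t-quadratic} and \eqref{eq:lemniscatic-curve} are mechanical; the genuinely error-prone step is confirming that the precise combinations---for instance $t^4-t^2+1$ in the $j=2$, $m$ even entry---emerge exactly, and not merely up to a spurious sign or factor of $\ii$.
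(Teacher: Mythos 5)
Your overall orientation is right---the table is indeed obtained by residue bookkeeping through the two-face coupling and the four-step orbit generated by $\tau$---but there is a genuine structural gap in the middle of your plan. You propose to read off $\alpha_\ell$ as a \emph{product} of four factors (incident residue, Snell Jacobian $r_I$, the coupling coefficients $\tfrac12(1\pm w_b')$, and an orbit phase). That is not how the coefficient arises. After eliminating $S$, the orbit system is not scalar: it couples the values of $Q$ on the four orbit branches and on both sides of the cut $|t|=1$. The paper decouples it by a length--$4$ DFT into four $2\times2$ mode problems $M_{U,k}U_k^{b,+}=M_{V,k}U_k^{b,-}+H_k^b$ (Appendix~A, \eqref{eq:mode-system}--\eqref{eq:MV-inv}), and $\alpha_\ell$ is defined as $t_\ell^4\,e_1^\top g_1(\ell)$ where $g_1(\ell)=M_{U,1}^{-1}(u_\ell)\Res H_1^b(u_\ell)$, i.e.\ the residue of the \emph{solution} of the mode-$1$ problem, not of the transported forcing. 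The inversion matters quantitatively: $\det M_{U,1}=\tfrac{\ii}{2}(1+g'g'_\tau)=\ii t^4/(t^4+1)$, and it is this determinant (cleared by the $t^4$ prefactor) together with the cofactor row $(B_0,-A_0)$ that produces the combinations such as $t^4-t^2+1$ in the $(m$ even, $j=2)$ entry and the exact zero at $(m$ even, $j=3)$ (there the first row of $M_{U,1}^{-1}$ gives $B_0(-A_0)-A_0(-B_0)=0$). A multiplicative transport of the forcing residue cannot generate these; without the solve step and the extraction functional $e_1^\top g_1$, the target of your computation is not even well defined.

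Two smaller discrepancies. First, you plan to simplify using the pole data of Lemma~\ref{lem:pole-points} ($Y=2bt-\sqrt2(t^2+1)$ and the quadratic \eqref{eq:t-quadratic}); the paper deliberately does \emph{not} use these---Appendix~A stresses that the tables are identities on the whole Snell surface, reduced only with $Y^2=2(t^4+1)$ and $\omega^4=1$. Using the pole relations would at best prove the table pointwise at the $u_\ell$, and would likely land you on a $b$-dependent rewriting that does not visibly match the stated polynomials. Second, your phase accounting is on the right track (the interaction of $(-1)^m$, $\ii^m$, and $\varepsilon_w$ does decide whether $\chi_m$, $\psi_m$, or $\kappa_m$ appears), but in the paper that phase enters through the explicit forcing residue vectors $\Res H_k^b=\omega^{-km}r_I(\ell)v_k^{(m,j)}$, with four distinct vectors $v_k^{(m,j)}$ indexed by $j$; without that input your case analysis over $(m\bmod 2,\,j)$ has nothing concrete to reduce.
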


\begin{proposition}[Coefficient table for $(\beta_\ell)$]\label{prop:beta-table}
For $\ell\in I_{\mathrm{scat}}$ with $j=j(\sigma,\varepsilon_w)$:
\begin{itemize}[leftmargin=2.0em]
\item if $m$ is even:
\[
\beta_\ell=
\begin{cases}
\chi_m r_I\,\dfrac{\ii}{\sqrt{2}}\,Y t^{2}, & j=1,\\
\chi_m r_I\,\dfrac{\ii}{\sqrt{2}}\,Y, & j=2,\\
-\chi_m r_I\,t^{4}, & j=3,\\
-\chi_m r_I, & j=4;
\end{cases}
\]
\item if $m$ is odd:
\[
\beta_\ell=
\begin{cases}
0, & j=1,\\
\kappa_m r_I\,\dfrac{1}{\sqrt{2}}\,Y(t^{2}+1), & j=2,\\
\psi_m r_I\,t^{2}, & j=3,\\
-\psi_m r_I\,(t^{4}+t^{2}+1), & j=4.
\end{cases}
\]
\end{itemize}
\end{proposition}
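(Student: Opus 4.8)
The plan is to obtain the $\beta_\ell$ table by the same forcing-residue calculus that yields the companion $\alpha_\ell$ table of Proposition~\ref{prop:alpha-table}, extracting the complementary residue component at each orbit pole. First I would insert the incident forcing $Q_{\mathrm{inc}}(\zeta)=(\zeta-\zeta_i)^{-1}$ from \eqref{eq:Q-split} into the face-coupling relation \eqref{eq:face-coupling} for each face $\theta_b\in\{\pm\theta_w\}$. The simple pole of $Q_{\mathrm{inc}}$ at $\zeta=\zeta_i$ is pulled back through the Snell map and transported around the quarter-period orbit \eqref{eq:orbit-branches}, landing at the point $u=u_\ell$ produced by the transport table \eqref{eq:transport}. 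The coefficient $\beta_\ell$ is then the residue at $u_\ell$ of the relevant spectral combination, and by the chain rule this residue factorizes: differentiating the pulled-back argument $\theta_b\pm w_m(u)$ supplies the factor $1/w_m'(u_\ell)$, which is exactly $r_I(\ell)$ of \eqref{eq:rI-def} up to the sign $\varepsilon_j$; this is weighted by the active coupling-matrix entry $\tfrac12\bigl(1\pm w_m'(u_\ell)\bigr)$ and by the orbit phase \eqref{eq:phase-symbols} selected by the parity of $m$.

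The second step is algebraic reduction to closed form. I would substitute $w_m'(u_\ell)$ from Lemma~\ref{lem:wmprime}—namely $\ii\sqrt{2}(t-1/t)$ for $m$ even and $-\ii\sqrt{2}(t+1/t)$ for $m$ odd, with $t=t_\ell$—and, where a $u$-derivative intervenes, the identity $t'(u)=-Y$ from \eqref{eq:dtdu}. The partial cancellation between the coupling entry (linear in $w_m'$) and the Jacobian ($\propto 1/w_m'$) is what collapses each raw residue to the clean monomial-in-$t$ or $Y\cdot(\text{monomial})$ forms in the statement; I would clear the remaining denominators and normalize using the curve relation $Y^2=2(t^4+1)$ from \eqref{eq:lemniscatic-curve}. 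The vanishing entries—$\beta_\ell=0$ at $m$ odd, $j=1$, and correspondingly $\alpha_\ell=0$ at $m$ even, $j=3$—occur exactly when the surviving numerator factor is annihilated by this reduction.

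The case structure is governed by two independent discrete choices. The parity of the orbit index $m$ fixes the form of $w_m'$ and hence whether the phase symbol is $\chi_m$ (for $m$ even) or $\psi_m,\kappa_m$ (for $m$ odd); the value $j=j(\sigma,\varepsilon_w)$ from \eqref{eq:j-map} fixes the face and the $S$-component ($S(\theta_b+z)$ versus $S(\theta_b-z)$) in which the pole resides, hence whether the $\tfrac12(1+w_m')$ or $\tfrac12(1-w_m')$ entry is active and which sign $\varepsilon_j$ of \eqref{eq:epsj} is attached. Running through the $15$ labels $\ell\in I_{\mathrm{scat}}$ and simplifying each sector produces the displayed $\beta_\ell$ table (and, by the same computation reading off the complementary residue, the $\alpha_\ell$ table of Proposition~\ref{prop:alpha-table}).

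The main obstacle I anticipate is not any single evaluation but the bookkeeping: keeping the signs $\varepsilon_j$, the phase symbols \eqref{eq:phase-symbols}, and the orbit transport \eqref{eq:transport} mutually consistent across all parity$\times j$ sectors, and verifying that every parity-dependent simplification genuinely collapses to the tabulated polynomial rather than a spurious rational expression. As an internal validation—rather than as part of the derivation—I would check that the resulting $\alpha_\ell$ and $\beta_\ell$ feed correctly into the jet functions $A$ and $B$, so that the half-period identities of \S7 eliminate the $\wp(u_0-u_\ell)$, $\wp'(u_0-u_\ell)$, $\wp''(u_0-u_\ell)$ terms and the jets become cancellable by the low-degree polynomials $p(t),q(t)$ of \S8; a sign or phase error in the residues would obstruct those cancellations and is the most likely place for an error to surface.
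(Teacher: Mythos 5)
Your overall philosophy (propagate the residue of $Q_{\mathrm{inc}}$ through the Snell map and the quarter-period orbits, pick up $r_I(\ell)=\varepsilon_j/w_m'(u_\ell)$ from the Jacobian, attach the phase symbols, and reduce on $Y^2=2(t^4+1)$) matches the spirit of the paper's Appendix~A, but the proposal omits the step that actually determines the table, and the step you substitute for it would not produce the stated entries. The coefficients $\beta_\ell$ are \emph{not} the residue of a forcing term weighted by a single face-coupling entry $\tfrac12(1\pm w_m'(u_\ell))$. In the paper, the two-face transmission conditions are first closed into a four-branch orbit system, decoupled by a length--4 DFT into four $2\times2$ mode Riemann--Hilbert problems \eqref{eq:mode-system}, and only then are residues taken: the solution residue vector at $u_\ell$ is $g_k(\ell)=M_{U,k}(u_\ell)^{-1}\Res_{u=u_\ell}H_k^b$ as in \eqref{eq:gk}, i.e.\ the forcing residue (which does carry a single entry $A_m$ or $B_m$, cf.\ the vectors $v_k^{(m,j)}$) must be pushed through the \emph{inverse} mode matrix \eqref{eq:MU-inv}. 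That inversion mixes $A_0,B_0,A_1,B_1$ from both orbit parities and divides by the determinant $\Delta_{U,k}=\omega^kA_0B_1-\omega^{-k}A_1B_0$; it is precisely this determinant and the cross terms, reduced with $Y^2=2(t^4+1)$, that generate the $t_\ell^4$ normalization and the non-monomial entries such as $Y(t^2+1)/\sqrt2$ ($m$ odd, $j=2$) and $-(t^4+t^2+1)$ ($m$ odd, $j=4$). A single coupling-entry weight cannot produce these.

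The second missing ingredient is the extraction rule: $\beta_\ell$ is specifically $t_\ell^4$ times the \emph{second} component of the mode-$k=3$ residue vector, $\beta_\ell=t_\ell^4\,e_2^\top g_3(\ell)$ (while $\alpha_\ell=t_\ell^4\,e_1^\top g_1(\ell)$), as in \eqref{eq:coeff-extract}. Your phrase ``the residue at $u_\ell$ of the relevant spectral combination'' leaves exactly this unspecified, and without it the case-by-case computation is not pinned down; in particular the vanishing entries ($m$ odd, $j=1$ for $\beta$; $m$ even, $j=3$ for $\alpha$) come from the specific component/mode selection, not from a cancellation between the coupling entry and the Jacobian as you suggest (indeed $r_I$ survives uncancelled in every nonzero table entry). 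Your proposed end-check via the jet cancellations of \S\ref{sec:jetkilling} is a sensible consistency test, but as written the derivation itself computes the residues of the forcing, not of the solution, and so has a genuine gap.
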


\begin{proposition}[Global residues $(C_\ell)$]\label{prop:C-table}
For $\ell\in I_{\mathrm{scat}}$ with $j=j(\sigma,\varepsilon_w)$:
\begin{itemize}[leftmargin=2.0em]
\item if $m$ is even:
\[
C_\ell=
\begin{cases}
-\chi_m\,\dfrac{r_I}{2t^{2}}, & j=1,\\[6pt]
-\dfrac{r_I}{2}\,\dfrac{2t^{4}-2t^{2}+1}{t^{4}}, & j=2,\ m=0,\\[10pt]
+\dfrac{r_I}{2}\,\dfrac{1}{t^{4}}, & j=2,\ m=2,\\[8pt]
0, & j=3,\\
0, & j=4;
\end{cases}
\]
\item if $m$ is odd:
\[
C_\ell=
\begin{cases}
0, & j=1,\\
0, & j=2,\\
\dfrac{r_I}{2}\Bigl(1+\kappa_m\dfrac{Y}{\sqrt{2}t^{2}}\Bigr), & j=3,\\[8pt]
-\dfrac{r_I}{2t^{2}}\Bigl(1+\kappa_m\dfrac{Y}{\sqrt{2}t^{2}}\Bigr), & j=4.
\end{cases}
\]
\end{itemize}
\end{proposition}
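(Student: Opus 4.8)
I would begin from the observation that, because $\zeta_W(u-u_\ell)$ contributes a simple pole of residue $1$ at $u=u_\ell$, while the subtracted constant $\zeta_W(u_0-u_\ell)$ and the cross terms $\zeta_W(u-u_{\ell'})$ (with $\ell'\neq\ell$) are regular there and $R$ is pole-free at every $u_\ell$ by Theorem~\ref{thm:nodoublecount}, the tabulated coefficient is exactly the local residue
\[
C_\ell=\Res_{u=u_\ell}Q_{\mathrm{scat}}(u).
\]
The plan is therefore to compute these residues directly from the spectral functional system. First I would assemble a scalar system for $Q$ alone by applying the face coupling of Proposition~\ref{prop:face-coupling} on both faces $\theta_b=\pm\theta_w$ and eliminating the medium-$1$ density $S$ between the two matrix relations; Lemma~\ref{lem:sommerfeld-nullity} justifies reading off pointwise identities from the vanishing of the resulting boundary integrals. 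Uniformizing by \eqref{eq:uniformization} converts this into a shift relation on the $u$-torus whose argument shifts are realized by the order-four automorphism $\tau$ through the orbit branches \eqref{eq:orbit-branches} and by half-period translations, with the only inhomogeneity the simple pole of $Q_{\mathrm{inc}}$ at $\zeta=\zeta_i$.

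Next I would transport the forcing. The single incident pole propagates through the four orbit branches $w_m$ and the two face reflections $\varepsilon_w=\pm1$ to the labelled points $(t_\ell,Y_\ell)\mapsto u_\ell$ fixed by Lemma~\ref{lem:pole-points} and the transport rule \eqref{eq:transport}; discarding $\ell_{\mathrm{inc}}=(0,+,-)$ leaves the fifteen points of $I_{\mathrm{scat}}$ (equation~\eqref{eq:Iscat}). At each surviving point the functional system localizes to a single linear residue equation, and solving it gives $C_\ell$. The passage from the $\zeta$-normalized forcing residue (equal to $\pm1$ at $\zeta_i$) to the $u$-residue carries the Jacobian of the orbit map, which is precisely the factor $r_I(\ell)=\varepsilon_j/w_m'(u_\ell)$ of \eqref{eq:rI-def}. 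Substituting the orbit derivative \eqref{eq:wmprime}, the explicit algebraic coordinates $(t_\ell,Y_\ell)$, and the transport phases $\ii^m$ packaged as $\chi_m,\psi_m,\kappa_m$ in \eqref{eq:phase-symbols}, I would reduce each residue to a rational function of $(t_\ell,Y_\ell)$ and simplify it on the curve using $Y^2=2(t^4+1)$ and $t_{\mathrm{in}}t_{\mathrm{out}}=1$. This is the same master residue evaluation that yields the tables for $\alpha_\ell$ and $\beta_\ell$ (Propositions~\ref{prop:alpha-table}--\ref{prop:beta-table}); here $C_\ell$ is the genuine simple-pole coefficient, whereas $\alpha_\ell,\beta_\ell$ record the higher jet data that later feed the jet-killing polynomials. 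The computation then splits into the eight cases of the statement (parity of $m$ against the four values of $j$ from \eqref{eq:j-map}), and the separate $m=0$ and $m=2$ entries in the $j=2$, $m$-even row are expected to reflect the incident exclusion: removing $(0,+,-)$ from the orbit breaks the $\tau^2$-relation that would otherwise identify the two $\tau^2$-paired residues.

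The main obstacle I anticipate is not the nonzero entries but the \emph{structural zeros}: $C_\ell=0$ for $j\in\{3,4\}$ with $m$ even and for $j\in\{1,2\}$ with $m$ odd. Because the corresponding $\alpha_\ell,\beta_\ell$ do not vanish, these zeros are not termwise trivial; they assert that the density is \emph{regular} (no simple pole) in those channels, the forcing entering only through the higher jet data. Establishing them requires showing an exact cancellation of the candidate simple-pole coefficient between the two faces (equivalently, between the $w_m$ and the reflected orbit contributions), and this is precisely where the half-period shift identities—those used to eliminate $\wp(u_0-u_\ell)$, $\wp'(u_0-u_\ell)$ and $\wp''(u_0-u_\ell)$ from the local data—are indispensable. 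As a global sanity check on the whole table I would verify the residue-balance relation $\sum_{\ell\in I_{\mathrm{scat}}}C_\ell=0$, the condition for the quasi-periodic $\zeta_W$-sum of Theorem~\ref{thm:mainresults} to descend to a single-valued density on the torus; confirming this balance parity-class by parity-class is the cleanest independent test that the per-pole bookkeeping is correct.
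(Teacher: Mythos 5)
Your overall plan---identify $C_\ell$ with $\Res_{u=u_\ell}Q_{\mathrm{scat}}(u)$, transport the incident pole along the orbit branches, pick up the Jacobian $r_I(\ell)=\varepsilon_j/w_m'(u_\ell)$ and the phases $\chi_m,\psi_m,\kappa_m$, and simplify on the curve $Y^2=2(t^4+1)$---matches the spirit of the paper's Appendix~A. But the computational core is missing and two of your specific mechanisms are wrong. The paper does \emph{not} eliminate $S$ to obtain a scalar equation with ``a single linear residue equation'' per pole; it keeps the coupled two-face system, applies a length-$4$ DFT over the $\tau$-orbit to decouple it into four $2\times2$ mode problems \eqref{eq:mode-system} with matrices $M_{U,k}$ in \eqref{eq:MU-MV}, computes the forcing residue vectors $\Res_{u=u_\ell}H_k^b=\omega^{-km}r_I(\ell)\,v_k^{(m,j)}$, inverts each $M_{U,k}$ via \eqref{eq:MU-inv}, and extracts $C_\ell=\tfrac14\sum_{k=0}^{3}e_1^\top g_k(\ell)$ as in \eqref{eq:coeff-extract}. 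Without this structure you cannot actually produce the table: the $m$-dependence of the entries (in particular the genuinely different $j=2$ formulas at $m=0$ and $m=2$) comes from the interference of the $\omega^{-km}$ phases under the mode average $\tfrac14\sum_k$, and the structural zeros are exact cancellations in that same average. Your attribution of the zeros to the half-period shift identities of \S\ref{sec:halfperiod} is incorrect: those identities ($W_{0\ell},W_{1\ell},W_{2\ell}$) enter only the jet coefficients $p_n,q_n$ via Propositions~\ref{prop:p-summands}--\ref{prop:q-summands}, and Appendix~A states explicitly that the residue tables are simplified using only $Y^2=2(t^4+1)$ and $\omega^4=1$.

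Your explanation of the $m=0$ versus $m=2$ split in the $j=2$ row is also wrong on its face: the incident label $\ell_{\mathrm{inc}}=(0,+,-)$ has $j=j(+,-)=1$ by \eqref{eq:j-map}, so its exclusion removes a point from the $j=1$ channel and cannot alter the pointwise residue formula at any $j=2$ label; the $j=2$ orbit pairs $\{(0,-,-),(2,-,-)\}$ are both present in $I_{\mathrm{scat}}$, and the asymmetry is intrinsic to the mode sum. Finally, the sanity check $\sum_{\ell\in I_{\mathrm{scat}}}C_\ell=0$ is not asserted anywhere in the paper---$Q_{\mathrm{scat}}$ is a function on the physical cut domain and is not claimed to descend to an elliptic function on the torus---so it is not a safe independent test of the table. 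To close the gap you would need to reconstruct the orbit/DFT reduction, the mode matrices and their inverses, and the forcing vectors $v_k^{(m,j)}$, and then carry out the case-by-case algebra of Lemma~\ref{lem:symbolic-verification}.
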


\subsection{Per-pole jet summands }\label{sec:jet-summands}
We define jet polynomials
\begin{equation}\label{eq:jet-polynomials}
p(t)=p_1 t+p_2 t^{2}+p_3 t^{3},\qquad q(t)=q_1 t+q_2 t^{2}+q_3 t^{3},
\end{equation}
with coefficients $p_n=\sum_{\ell\in I_{\mathrm{scat}}}p_n^{(\ell)}$ and $q_n=\sum_{\ell\in I_{\mathrm{scat}}}q_n^{(\ell)}$.
For later reference we record the per-pole contributions to the jet-killing coefficients in closed form.
The derivation is given in Appendix~A; the only changes are the half-period shift data (Section~\ref{sec:halfperiod})
and the local scale $\delta\sim t/\sqrt{2}$ (Section~\ref{sec:localrelation}).

\begin{proposition}[Per-pole $p$-summands]\label{prop:p-summands}
Evaluate at $(t,Y)=(t_\ell,Y_\ell)$ and denote $D:=Y+\sqrt{2}$.
Then
\[
p_1^{(\ell)}=\frac{1}{\sqrt{2}}\,\alpha_\ell\,W_{0\ell},\qquad
p_2^{(\ell)}=\frac{1}{4}\,\alpha_\ell\,W_{1\ell},\qquad
p_3^{(\ell)}=\frac{1}{12\sqrt{2}}\,\alpha_\ell\,W_{2\ell},
\]
where $(W_{0\ell},W_{1\ell},W_{2\ell})$ are given in \eqref{eq:W-shift}.
\end{proposition}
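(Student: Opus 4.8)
The plan is to obtain $p_1^{(\ell)},p_2^{(\ell)},p_3^{(\ell)}$ as the per-pole contributions to the low-order $t$-jet of the gauge-normalized, $\alpha$-weighted Weierstrass--$\zeta_W$ sum that $p(t)$ is built to cancel. I would write $A(u)=\sum_{\ell\in I_{\mathrm{scat}}}\alpha_\ell[\zeta_W(u-u_\ell)-\zeta_W(u_0-u_\ell)]$; by linearity of the Taylor expansion it suffices to analyze one summand $A_\ell(u)=\alpha_\ell[\zeta_W(u-u_\ell)-\zeta_W(u_0-u_\ell)]$ and define $p_n^{(\ell)}$ as the negative of the coefficient of $t^n$ in $A_\ell(u(t))$, so that summing over $\ell\in I_{\mathrm{scat}}$ yields $A(u(t))+p(t)=O(t^4)$ as asserted in \S\ref{sec:jet-summands}. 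The gauge subtraction \eqref{eq:gauge} annihilates the constant term, matching the fact that $p(t)$ in \eqref{eq:jet-polynomials} has no $t^0$ term.

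First I would pin down the local relation between $v:=u-u_0$ and $t$ on the physical component \eqref{eq:physical-component}. From $\dd t/\dd u=-Y$ (\eqref{eq:dtdu}) and the physical branch $Y=-\sqrt{2(1+t^4)}$ fixed by $Y\to-\sqrt2$ as $t\to0$, one gets $\dd v/\dd t=-1/Y=\bigl(\sqrt2\,\sqrt{1+t^4}\bigr)^{-1}$; integrating from $t=0$ (a lemniscatic integral) gives $v=\tfrac{1}{\sqrt2}\bigl(t-\tfrac{1}{10}t^5+\cdots\bigr)$, i.e.\ $v=t/\sqrt2+O(t^5)$, which is the scale $\delta\sim t/\sqrt2$ of \S\ref{sec:localrelation}. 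The decisive feature is that the first correction appears only at fifth order, so $v=t/\sqrt2$, $v^2=t^2/2$ and $v^3=t^3/(2\sqrt2)$ all hold modulo $O(t^4)$; the $v$-jet and the $t$-jet are thus diagonally related through third order, and no cross-terms arise.

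Next I would Taylor-expand $A_\ell$ in $v$ about $u_0$. Using $\zeta_W'=-\wp$, $\zeta_W''=-\wp'$, $\zeta_W'''=-\wp''$ to differentiate $u\mapsto\zeta_W(u-u_\ell)$ at $u=u_0$ gives $A_\ell=\alpha_\ell\bigl[-\wp(u_0-u_\ell)\,v-\tfrac12\wp'(u_0-u_\ell)\,v^2-\tfrac16\wp''(u_0-u_\ell)\,v^3\bigr]+O(v^4)$. Substituting $v=t/\sqrt2$ and reading off the $t^1,t^2,t^3$ coefficients produces $-\alpha_\ell\wp(u_0-u_\ell)/\sqrt2$, $-\alpha_\ell\wp'(u_0-u_\ell)/4$, and $-\alpha_\ell\wp''(u_0-u_\ell)/(12\sqrt2)$; negating them and identifying the half-period-shift symbols $W_{0\ell}=\wp(u_0-u_\ell)$, $W_{1\ell}=\wp'(u_0-u_\ell)$, $W_{2\ell}=\wp''(u_0-u_\ell)$ as reduced in \eqref{eq:W-shift} gives precisely $p_1^{(\ell)}=\tfrac{1}{\sqrt2}\alpha_\ell W_{0\ell}$, $p_2^{(\ell)}=\tfrac14\alpha_\ell W_{1\ell}$, $p_3^{(\ell)}=\tfrac{1}{12\sqrt2}\alpha_\ell W_{2\ell}$. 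The rational prefactors are exactly $1/(n!\,(\sqrt2)^n)$ for $n=1,2,3$, reflecting the $n!$ from the Taylor expansion and the $(\sqrt2)^{-n}$ from the local scale.

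The main obstacle is not the algebra but justifying the clean diagonal local scale to the order required: the collapse to the simple factors $1/(n!\,(\sqrt2)^n)$ holds only because the correction to $v=t/\sqrt2$ is $O(t^5)$, which makes the third-order $t$-jet insensitive to the nonlinearity of $u\mapsto t(u)$. I would therefore exhibit the $O(t^5)$ remainder explicitly from the lemniscatic integral; this is the configuration-specific simplification that \S\ref{sec:localrelation} supplies. A secondary, purely bookkeeping matter is that the $W_{i\ell}$ must genuinely be the half-period reductions of $\wp^{(i)}(u_0-u_\ell)$ expressed through $(t_\ell,Y_\ell)$ and $D=Y_\ell+\sqrt2$; these I would \emph{import} from the shift identities of \S\ref{sec:halfperiod} (eq.~\eqref{eq:W-shift}), whose validity rests on the injective uniformization \eqref{eq:uniformization}, rather than re-deriving them here.
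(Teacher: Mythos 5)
Your proposal is correct and follows essentially the same route as the paper: Proposition~\ref{prop:p-summands} is just the per-pole form of \eqref{eq:p-coeffs}, obtained exactly as in \S\ref{sec:jetkilling} by Taylor-expanding $\zeta_W(u-u_\ell)-\zeta_W(u_0-u_\ell)$ at $u_0$ via $\zeta_W'=-\wp$ and substituting $\delta=t/\sqrt{2}+O(t^5)$. The only (harmless) deviation is that you derive the local scale by integrating $\dd u/\dd t=-1/Y$ on the physical branch rather than by matching the series of $\wp(u_0+\delta)$ against \eqref{eq:xW-expansion} as in \S\ref{sec:localrelation}; both yield the same absence of $t^2,t^3,t^4$ corrections, which is the point your argument correctly identifies as decisive.
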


\begin{proposition}[Per-pole $q$-summands]\label{prop:q-summands}
Evaluate at $(t,Y)=(t_\ell,Y_\ell)$ and denote $D:=Y+\sqrt{2}$.
Then
\[
q_1^{(\ell)}=\frac{1}{\sqrt{2}}\,\beta_\ell\,W_{0\ell},\qquad
q_2^{(\ell)}=\frac{1}{4}\,\beta_\ell\,W_{1\ell},\qquad
q_3^{(\ell)}=\frac{1}{12\sqrt{2}}\,\beta_\ell\,W_{2\ell},
\]
where $(W_{0\ell},W_{1\ell},W_{2\ell})$ are given in \eqref{eq:W-shift}.
\end{proposition}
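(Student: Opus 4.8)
The plan is to obtain the $q$-summand formulas by running the same local jet computation that produced the $p$-summands in Proposition~\ref{prop:p-summands}, with the residue weight $\alpha_\ell$ replaced throughout by $\beta_\ell$. The structural reason this works is that the two jet-killing polynomials $p$ and $q$ are counter-jets of functions assembled from one and the same per-pole elliptic building block. Writing $E_\ell(u):=\zeta_W(u-u_\ell)-\zeta_W(u_0-u_\ell)$ for the gauge-fixed summand of Remark~\ref{rem:gauge}, the functions to be regularized at the physical basepoint $t=0$ are $A(u)=\sum_{\ell\in I_{\mathrm{scat}}}\alpha_\ell\,E_\ell(u)$ and $B(u)=\sum_{\ell\in I_{\mathrm{scat}}}\beta_\ell\,E_\ell(u)$; since $B$ differs from $A$ only in the scalar weights, its $t$-jet at $t=0$ along the physical lift is obtained from that of $A$ by the substitution $\alpha_\ell\mapsto\beta_\ell$, and the per-pole counter-jet coefficients inherit the same replacement. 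Thus the entire content reduces to recalling the $t$-expansion of the common block $E_\ell(u(t))$ through order $t^3$.

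First I would record that expansion, which rests on two inputs. For the local scale (Section~\ref{sec:localrelation}): from $t'(u)=-Y(u)$ in \eqref{eq:dtdu} and $Y(u)=-\sqrt2\,(1+\tfrac12 t^4+\cdots)$ on $\Omega_{\mathrm{phys}}^+$, one gets $du/dt=-1/Y=\tfrac{1}{\sqrt2}+O(t^4)$, so the physical displacement satisfies $\delta(t):=u(t)-u_0=\tfrac{t}{\sqrt2}+O(t^5)$; in particular the local-scale corrections enter only at order $t^5$ and cannot affect the $t^1,t^2,t^3$ coefficients, so $\delta=t/\sqrt2$ may be used through order $t^3$. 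The Taylor series of $\zeta_W$ about $u_0-u_\ell$, using $\zeta_W'=-\wp$, $\zeta_W''=-\wp'$, $\zeta_W'''=-\wp''$, then gives the coefficients of $t^1,t^2,t^3$ in $E_\ell(u(t))$ as $\tfrac{1}{\sqrt2}\,(-\wp)$, $\tfrac14\,(-\wp')$, $\tfrac{1}{12\sqrt2}\,(-\wp'')$ evaluated at $u_0-u_\ell$, the prefactors $\tfrac{1}{\sqrt2},\tfrac14,\tfrac{1}{12\sqrt2}$ arising as $c_1,\ \tfrac12 c_1^2,\ \tfrac16 c_1^3$ with $c_1=1/\sqrt2$. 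Finally the half-period shift identities of Section~\ref{sec:halfperiod}, which exploit that $u_0$ is a half-period with $\wp(u_0)=-1$ and $\wp'(u_0)=0$, rewrite $\wp,\wp',\wp''$ at $u_0-u_\ell$ as the algebraic symbols $(W_{0\ell},W_{1\ell},W_{2\ell})$ of \eqref{eq:W-shift} in $(t_\ell,Y_\ell)$; these are the very same symbols, with the same sign and normalization conventions, already used in the $p$-summands.

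Multiplying the building-block jet by $\beta_\ell$, summing over $\ell\in I_{\mathrm{scat}}$, and matching against $B(u(t))+q(t)=O(t^4)$ then identifies $q_n^{(\ell)}$ with $\beta_\ell$ times the $n$-th prefactor times $W_{(n-1)\ell}$, which is exactly the claimed table. I expect no genuinely new obstacle here: the only nontrivial analytic ingredient---the half-period shift that turns $\wp,\wp',\wp''$ at $u_0-u_\ell$ into algebraic functions of the pole coordinates---has already been carried out in Section~\ref{sec:halfperiod} and packaged into the symbols $W_{k\ell}$, and the $O(t^5)$ truncation of $\delta(t)$ guarantees that the local-scale corrections are irrelevant through order $t^3$. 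Consequently the proof amounts to invoking the computation of Proposition~\ref{prop:p-summands} verbatim with $\alpha_\ell$ replaced by $\beta_\ell$, the shared prefactors and shift symbols $W_{k\ell}$ ensuring that all signs and normalizations carry over unchanged.
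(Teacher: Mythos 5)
Your proposal is correct and follows essentially the same route as the paper: the $q$-summands are obtained from the jet-cancellation computation of Section~\ref{sec:jetkilling} (Taylor expansion of $\zeta_W(u-u_\ell)-\zeta_W(u_0-u_\ell)$ with $\zeta_W'=-\wp$, the local scale $\delta=t/\sqrt{2}+O(t^5)$, and the half-period shift symbols $W_{k\ell}$), with the paper likewise disposing of the $\beta_\ell$ case by noting it is the identical argument with $\alpha_\ell$ replaced by $\beta_\ell$. Your derivation of the local scale via $t'(u)=-Y(u)$ is a harmless minor variant of the paper's matching of \eqref{eq:xW-expansion} against \eqref{eq:wp-expansion2}, and your prefactors $1/\sqrt{2}$, $1/4$, $1/(12\sqrt{2})$ and sign conventions agree with \eqref{eq:q-coeffs}.
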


\begin{remark}\label{rem:jet-summands}
For fully explicit ``one-line'' formulas in terms of $t_\ell,Y_\ell$ only, one may substitute the tables in
Propositions~\ref{prop:alpha-table}--\ref{prop:beta-table} and the explicit $r_I$ form \eqref{eq:rI-explicit}
into Propositions~\ref{prop:p-summands}--\ref{prop:q-summands} and simplify using $Y^2=2(t^4+1)$.
We keep the compact factorized form above because it is both verifiable and robust under algebraic refactoring.
\end{remark}

\section{Half-period shift identities at e2 = -1}\label{sec:halfperiod}
We prove the identities
\begin{equation}\label{eq:W-shift}
\begin{aligned}
W_{0\ell}&:=\wp(u_0-u_\ell)=-\frac{\sqrt{2}\,t_\ell^{2}}{Y_\ell+\sqrt{2}},\\
W_{1\ell}&:=\wp'(u_0-u_\ell)=-\frac{4t_\ell}{Y_\ell+\sqrt{2}},\\
W_{2\ell}&:=\wp''(u_0-u_\ell)=\frac{12t_\ell^{4}}{(Y_\ell+\sqrt{2})^{2}}-2.
\end{aligned}
\end{equation}

\subsection{Specialization of the addition theorem}
Start from the general addition theorem (see, e.g., \cite[\S23.10]{NISTDLMF} or \cite[Ch.~20]{WhittakerWatson1927})
\begin{equation}\label{eq:addition}
\wp(u+v)=-\wp(u)-\wp(v)+\frac14\left(\frac{\wp'(u)-\wp'(v)}{\wp(u)-\wp(v)}\right)^2.
\end{equation}
Specialize to $v=u_0$ with $\wp(u_0)=-1$ and $\wp'(u_0)=0$:
\begin{equation}\label{eq:addition-u0}
\wp(u+u_0)=-\wp(u)+1+\frac14\left(\frac{\wp'(u)}{\wp(u)+1}\right)^2.
\end{equation}
Using the differential equation for \eqref{eq:invariants} (see, e.g., \cite[\S23.6]{NISTDLMF}),
\begin{equation}\label{eq:wp-ode}
(\wp'(u))^2=4\wp(u)^3-4\wp(u)=4\wp(u)\bigl(\wp(u)^2-1\bigr),
\end{equation}
we obtain
\[
\frac14\left(\frac{\wp'(u)}{\wp(u)+1}\right)^2
=\frac{\wp(u)\bigl(\wp(u)^2-1\bigr)}{(\wp(u)+1)^2}
=\frac{\wp(u)(\wp(u)-1)}{\wp(u)+1}.
\]
Substituting yields the half-period shift identity
\begin{equation}\label{eq:halfperiod-shift}
\wp(u+u_0)=-1+\frac{2}{\wp(u)+1}.
\end{equation}
Differentiating \eqref{eq:halfperiod-shift} gives
\begin{equation}\label{eq:halfperiod-shift-derivative}
\wp'(u+u_0)=-\frac{2\wp'(u)}{(\wp(u)+1)^2}.
\end{equation}
Differentiating once more yields
\begin{equation}\label{eq:halfperiod-shift-second}
\wp''(u+u_0)=-2\left(\frac{\wp''(u)}{(\wp(u)+1)^2}-\frac{2(\wp'(u))^2}{(\wp(u)+1)^3}\right).
\end{equation}

\subsection{Specialization to u0 - u\_l and algebraic elimination}
Set $u=-u_\ell$. Using even/oddness of $\wp$ and $\wp'$,
\[
\wp(-u_\ell)=\wp(u_\ell),\qquad \wp'(-u_\ell)=-\wp'(u_\ell),\qquad \wp''(-u_\ell)=\wp''(u_\ell),
\]
we obtain from \eqref{eq:halfperiod-shift}--\eqref{eq:halfperiod-shift-derivative}
\begin{equation}\label{eq:shift-u0-uell}
\wp(u_0-u_\ell)=\frac{1-\wp(u_\ell)}{1+\wp(u_\ell)},\qquad
\wp'(u_0-u_\ell)=\frac{2\wp'(u_\ell)}{(\wp(u_\ell)+1)^2}.
\end{equation}
Now substitute the uniformization $\wp(u_\ell)=x_W(t_\ell,Y_\ell)$ and $\wp'(u_\ell)=2y_W(t_\ell,Y_\ell)$ from \eqref{eq:uniformization}.
A direct simplification gives the first two identities in \eqref{eq:W-shift}. Finally, using
\[
\wp''(u)=6\wp(u)^2-\frac{g_2}{2}=6\wp(u)^2-2\qquad (g_2=4),
\]
yields the third identity in \eqref{eq:W-shift}.
\qed

\section{Jet-killing construction and jet cancellation}\label{sec:jetkilling}
\subsection{Definitions}
Let $\zeta_W(u)$ denote the Weierstrass zeta function, characterized by $\zeta_W'(u)=-\wp(u)$ and $\zeta_W(u)\sim u^{-1}$ as $u\to0$.
Define
\begin{equation}\label{eq:A-B-def}
A(u)=\sum_{\ell\in I_{\mathrm{scat}}}\alpha_\ell\big[\zeta_W(u-u_\ell)-\zeta_W(u_0-u_\ell)\big],\qquad
B(u)=\sum_{\ell\in I_{\mathrm{scat}}}\beta_\ell\big[\zeta_W(u-u_\ell)-\zeta_W(u_0-u_\ell)\big].
\end{equation}
Define jet-killing polynomials
\begin{equation}\label{eq:pq-def}
p(t)=p_1 t+p_2 t^{2}+p_3 t^{3},\qquad q(t)=q_1 t+q_2 t^{2}+q_3 t^{3},
\end{equation}
with coefficients fixed by
\begin{equation}\label{eq:p-coeffs}
p_1=\frac{1}{\sqrt{2}}\sum_{\ell\in I_{\mathrm{scat}}}\alpha_\ell W_{0\ell},\qquad
p_2=\frac14\sum_{\ell\in I_{\mathrm{scat}}}\alpha_\ell W_{1\ell},\qquad
p_3=\frac{1}{12\sqrt{2}}\sum_{\ell\in I_{\mathrm{scat}}}\alpha_\ell W_{2\ell},
\end{equation}
\begin{equation}\label{eq:q-coeffs}
q_1=\frac{1}{\sqrt{2}}\sum_{\ell\in I_{\mathrm{scat}}}\beta_\ell W_{0\ell},\qquad
q_2=\frac14\sum_{\ell\in I_{\mathrm{scat}}}\beta_\ell W_{1\ell},\qquad
q_3=\frac{1}{12\sqrt{2}}\sum_{\ell\in I_{\mathrm{scat}}}\beta_\ell W_{2\ell}.
\end{equation}

\subsection{Local relation between u and t near the basepoint}\label{sec:localrelation}
Let $\delta:=u-u_0$. From the uniformization $\wp(u)=x_W(t,Y)$ and the physical branch $Y\sim-\sqrt{2}$ as $t\to0$,
expand $x_W(t,Y)$ as $t\to0$ on $\Omega_{\mathrm{phys}}^+$:
\begin{equation}\label{eq:xW-expansion}
x_W(t,Y)=-1+t^2-\frac12 t^4+O(t^6).
\end{equation}
Next expand $\wp(u_0+\delta)$. Since $\wp'(u_0)=0$, only even powers appear:
\begin{equation}\label{eq:wp-expansion}
\wp(u_0+\delta)=\wp(u_0)+\frac{\wp''(u_0)}{2}\delta^2+\frac{\wp^{(4)}(u_0)}{24}\delta^4+O(\delta^6).
\end{equation}
With $\wp(u_0)=-1$ and $\wp''(u)=6\wp(u)^2-2$, we have $\wp''(u_0)=4$.
Moreover, $\wp^{(3)}(u)=12\wp(u)\wp'(u)$ so $\wp^{(3)}(u_0)=0$, and
$\wp^{(4)}(u)=12(\wp'(u))^2+12\wp(u)\wp''(u)$ gives $\wp^{(4)}(u_0)=12(-1)\cdot4=-48$.
Thus
\begin{equation}\label{eq:wp-expansion2}
\wp(u_0+\delta)=-1+2\delta^2-2\delta^4+O(\delta^6).
\end{equation}
Equating \eqref{eq:xW-expansion} and \eqref{eq:wp-expansion2} yields
\[
2\delta^2-2\delta^4=t^2-\frac12 t^4+O(t^6).
\]
Writing $\delta^2=\frac12 t^2+a t^4+O(t^6)$ gives $\delta^4=\frac14 t^4+O(t^6)$ and forces $a=0$, hence
\begin{equation}\label{eq:delta-vs-t}
\delta=\frac{t}{\sqrt{2}}+O(t^5),
\end{equation}
which explicitly rules out any $t^3$ term.

\subsection{Jet cancellation}
Fix $\ell$. Taylor expand $\zeta_W$ about $u_0-u_\ell$ using $\zeta_W'=-\wp$:
\begin{equation}\label{eq:zeta-taylor}
\zeta_W(u-u_\ell)-\zeta_W(u_0-u_\ell)
=-\delta\,\wp(u_0-u_\ell)-\frac{\delta^2}{2}\wp'(u_0-u_\ell)-\frac{\delta^3}{6}\wp''(u_0-u_\ell)+O(\delta^4),
\end{equation}
where $\delta=u-u_0$.
Multiply by $\alpha_\ell$ and sum over $\ell\in I_{\mathrm{scat}}$ to obtain
\[
A(u)=-\delta\sum_\ell \alpha_\ell W_{0\ell}-\frac{\delta^2}{2}\sum_\ell \alpha_\ell W_{1\ell}-\frac{\delta^3}{6}\sum_\ell \alpha_\ell W_{2\ell}+O(\delta^4).
\]
Using \eqref{eq:delta-vs-t} gives $\delta^n=(t/\sqrt{2})^n+O(t^{n+4})$ for $n=1,2,3$, hence
\[
A(u(t))=-\frac{t}{\sqrt{2}}\sum_\ell \alpha_\ell W_{0\ell}-\frac{t^2}{4}\sum_\ell \alpha_\ell W_{1\ell}-\frac{t^3}{12\sqrt{2}}\sum_\ell \alpha_\ell W_{2\ell}+O(t^4).
\]
By definition of $(p_1,p_2,p_3)$ in \eqref{eq:p-coeffs}, the polynomial $p(t)$ is the negative of the displayed cubic truncation, so
\begin{equation}\label{eq:jet-cancel-A}
A(u(t))+p(t)=O(t^4),\qquad t\to0\ \text{on }\Omega_{\mathrm{phys}}^+.
\end{equation}
The identical argument with $\beta_\ell$ yields
\begin{equation}\label{eq:jet-cancel-B}
B(u(t))+q(t)=O(t^4).
\end{equation}

\section{Tau-squared pairing compression}\label{sec:pairing}
Define the involution on labels
\begin{equation}\label{eq:pairing-involution}
\ell=(m,\sigma,\varepsilon_w)\longmapsto \ell'=(m+2\!\!\!\!\pmod 4,\ \sigma,\varepsilon_w).
\end{equation}
Under this pairing, the pole transport \eqref{eq:transport} implies
\begin{equation}\label{eq:pairing-transport}
t_{\ell'}=-t_\ell,\qquad Y_{\ell'}=Y_\ell,
\end{equation}
and the phase symbols flip:
\begin{equation}\label{eq:pairing-phases}
\chi_{m+2}=-\chi_m,\qquad \psi_{m+2}=-\psi_m,\qquad \kappa_{m+2}=-\kappa_m.
\end{equation}
Applying \eqref{eq:wmprime} under $t\mapsto -t$ yields $w'_{m+2}(u_{\ell'})=-w'_m(u_\ell)$ and hence
\begin{equation}\label{eq:pairing-rI}
r_I(\ell')=-r_I(\ell).
\end{equation}
Using \eqref{eq:W-shift}, we have $W_{0\ell'}=W_{0\ell}$, $W_{2\ell'}=W_{2\ell}$, but $W_{1\ell'}=-W_{1\ell}$.
Moreover, from Propositions~\ref{prop:alpha-table}--\ref{prop:beta-table}, the phase flip and $r_I$ flip cancel, so
$\alpha_{\ell'}=\alpha_\ell$ and $\beta_{\ell'}=\beta_\ell$.

Consequently:
\begin{itemize}[leftmargin=2.2em]
\item Pair contributions double in $p_1,p_3,q_1,q_3$ (built from $W_0$ and $W_2$).
\item Pair contributions cancel in $p_2,q_2$ (built from $W_1$).
\end{itemize}
The only broken pair arises from excluding $\ell_{\mathrm{inc}}=(0,+,-)$, whose partner is $\ell'_{\mathrm{inc}}=(2,+,-)$, so in particular
\begin{equation}\label{eq:p2q2-broken}
p_2=p^{(\ell'_{\mathrm{inc}})}_2,\qquad q_2=q^{(\ell'_{\mathrm{inc}})}_2.
\end{equation}

\section{Singular channel, explicit d\_l table, and canonical no-double-counting theorem}\label{sec:singularchannel}
\subsection{Singular channel and residue definition}
Define
\begin{equation}\label{eq:beta-channel}
\beta_{\mathrm{ch}}(t,Y):=\frac{\ii Y}{\sqrt{2}}\,(t^2-1),
\end{equation}
and the singular channel
\begin{equation}\label{eq:P13}
P_{1,3}(u)=\frac{A(u)+p(t(u))}{4t(u)^4}-\beta_{\mathrm{ch}}(t(u),Y(u))\,\frac{B(u)+q(t(u))}{4t(u)^4}.
\end{equation}
At each forcing pole $u=u_\ell$, we have $\Res_{u=u_\ell}A(u)=\alpha_\ell$ and $\Res_{u=u_\ell}B(u)=\beta_\ell$, hence the residue
\begin{equation}\label{eq:d-def}
d_\ell:=\Res_{u=u_\ell}P_{1,3}(u)=\frac{\alpha_\ell-\beta_{\mathrm{ch}}(t_\ell,Y_\ell)\beta_\ell}{4t_\ell^4}.
\end{equation}

\subsection{Explicit d\_l parity-by-j table}
Substitute Propositions~\ref{prop:alpha-table}--\ref{prop:beta-table} into \eqref{eq:d-def} and simplify using only $Y^2=2(t^4+1)$.

\begin{proposition}[Explicit $d_\ell$ table]\label{prop:d-table}
Write $(t,Y)=(t_\ell,Y_\ell)$ and $r_I=r_I(\ell)$.
If $m$ is even:
\[
d_\ell=
\begin{cases}
\displaystyle \chi_m\frac{r_I}{4t^{2}}\,(t^{6}-t^{4}+t^{2}-2), & j=1,\\[6pt]
\displaystyle \chi_m\frac{r_I}{4t^{4}}\,(t^{6}-2t^{4}+2t^{2}-2), & j=2,\\[6pt]
\displaystyle \chi_m\frac{r_I}{4}\,\frac{\ii Y}{\sqrt{2}}(t^{2}-1), & j=3,\\[6pt]
0, & j=4;
\end{cases}
\]
If $m$ is odd:
\[
d_\ell=
\begin{cases}
\displaystyle \psi_m\frac{r_I}{4}, & j=1,\\[6pt]
\displaystyle \psi_m\frac{r_I}{4}\,t^{4}, & j=2,\\[6pt]
\displaystyle \kappa_m\frac{r_I}{4}\,\frac{Y}{\sqrt{2}}\,\frac{2-t^{2}}{t^{2}}, & j=3,\\[10pt]
\displaystyle \kappa_m\frac{r_I}{4}\,\frac{Y}{\sqrt{2}}\,\frac{t^{6}-2}{t^{4}}, & j=4.
\end{cases}
\]
\end{proposition}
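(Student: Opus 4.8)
The plan is to treat Proposition~\ref{prop:d-table} as a direct algebraic consequence of the residue formula \eqref{eq:d-def} together with the coefficient tables of Propositions~\ref{prop:alpha-table} and~\ref{prop:beta-table}. The starting point is the closed form
\[
d_\ell=\frac{\alpha_\ell-\beta_{\mathrm{ch}}(t_\ell,Y_\ell)\,\beta_\ell}{4t_\ell^{4}},
\qquad
\beta_{\mathrm{ch}}(t,Y)=\frac{\ii Y}{\sqrt{2}}\,(t^{2}-1),
\]
so the entire claim reduces, parity-by-parity and $j$-by-$j$, to evaluating the numerator $\alpha_\ell-\beta_{\mathrm{ch}}\beta_\ell$ and dividing by $4t^{4}$. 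I would organize the verification into the eight cases ($m$ even/odd times $j\in\{1,2,3,4\}$), writing $(t,Y)=(t_\ell,Y_\ell)$ and $r_I=r_I(\ell)$ throughout, and in each case first form the product $\beta_{\mathrm{ch}}\beta_\ell$ from the tabulated values.

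The one genuinely delicate bookkeeping step—and what I regard as the main (though entirely elementary) obstacle—is the interaction of the phase symbols $\chi_m,\psi_m,\kappa_m$ of \eqref{eq:phase-symbols} with the factor $\ii$ carried by $\beta_{\mathrm{ch}}$. For $m$ even, both $\alpha_\ell$ and $\beta_\ell$ carry $\chi_m$ (when nonzero), and $\chi_m^{2}=1$ together with $\ii^{2}=-1$ collapses every phase to a single $\chi_m$. For $m$ odd, the labels of $\alpha_\ell$ and $\beta_\ell$ generally differ, and the reconciling identities are $\psi_m\cdot\ii=\ii^{m+1}=\kappa_m$ and $\kappa_m\cdot\ii=\ii^{m+2}=-\psi_m$; these are exactly what realign the phase carried by $\beta_{\mathrm{ch}}\beta_\ell$ with that of $\alpha_\ell$, so that the numerator condenses to a common phase prefactor matching the table.

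The second mechanical ingredient is the reduction via $Y^{2}=2(t^{4}+1)$. Whenever $\beta_\ell$ itself carries a factor $Y$—namely $m$ even with $j\in\{1,2\}$ and $m$ odd with $j=2$—the product $\beta_{\mathrm{ch}}\beta_\ell$ produces $Y^{2}$, which I would immediately replace by $2(t^{4}+1)$, leaving a pure polynomial in $t$. When $\beta_\ell$ is $Y$-free and nonzero (the $m$ even, $j\in\{3,4\}$ and $m$ odd, $j\in\{3,4\}$ entries), the product retains a single explicit $Y$, which survives into the final $Y/\sqrt{2}$ or $\ii Y/\sqrt{2}$ terms of the table. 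Along the way the factorizations $(t^{2}-1)(t^{4}+1)=t^{6}-t^{4}+t^{2}-1$, $(t^{2}-1)(t^{2}+1)=t^{4}-1$, and $(t^{2}-1)(t^{4}+t^{2}+1)=t^{6}-1$ account for the sextic numerators appearing in the statement.

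Finally, I would assemble $\alpha_\ell-\beta_{\mathrm{ch}}\beta_\ell$ in each case and divide by $4t^{4}$, reading off the tabulated expression (the division accounts for the $1/t^{2}$ and $1/t^{4}$ denominators). Two cases serve as useful consistency checks: for $m$ even, $j=3$ one has $\alpha_\ell=0$, so $d_\ell$ is the $-\beta_{\mathrm{ch}}\beta_\ell/(4t^{4})$ term alone; and for $m$ even, $j=4$ the identity $\alpha_\ell=\beta_{\mathrm{ch}}\beta_\ell$ forces $d_\ell=0$, matching the table. Since no analytic input beyond the established residue formula \eqref{eq:d-def} and the curve relation $Y^{2}=2(t^{4}+1)$ is used, the proof is a finite, self-contained computation.
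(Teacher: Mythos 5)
Your proposal is correct and is essentially identical to the paper's own (very terse) proof, which simply instructs the reader to substitute Propositions~\ref{prop:alpha-table}--\ref{prop:beta-table} into \eqref{eq:d-def} and reduce with $Y^2=2(t^4+1)$; your explicit tracking of the phase identities $\ii\psi_m=\kappa_m$, $\ii\kappa_m=-\psi_m$ and of the $Y^2$-elimination cases matches the mechanism the paper flags in Remark~\ref{rem:flagged}. All eight case computations you outline check out against the tabulated entries.
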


\begin{remark}[Flagged mechanism]\label{rem:flagged}
In the cases (even $m$, $j=1$), (even $m$, $j=2$), and (odd $m$, $j=2$), the coefficient $\beta_\ell$ carries a factor $Y$, so
$\beta_{\mathrm{ch}}(t,Y)\beta_\ell$ carries $Y^2$. Replacing $Y^2$ by $2(t^4+1)$ via $Y^2=2(t^4+1)$ removes $Y$, making these $d_\ell$
purely $t$-rational, as visible in Proposition~\ref{prop:d-table}.
\end{remark}

\subsection{Remainder R(u) and pole cancellation}
Define
\begin{equation}\label{eq:R-def}
R(u):=P_{1,3}(u)-\sum_{\ell\in I_{\mathrm{scat}}}d_\ell\big[\zeta_W(u-u_\ell)-\zeta_W(u_0-u_\ell)\big].
\end{equation}

\begin{lemma}[Pole cancellation and analyticity of $R$]\label{lem:R-analytic}
\begin{enumerate}[label=(\roman*)]
\item $R(u)$ has no pole at any $u=u_\ell$, $\ell\in I_{\mathrm{scat}}$.
\item $R(u)$ is analytic at $u=u_0$ (equivalently at $t=0$ on $\Omega_{\mathrm{phys}}^+$).
\end{enumerate}
\end{lemma}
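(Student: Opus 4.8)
The plan is to treat the two claims separately, since each is a local assertion at a single point of the uniformizing torus and reduces to tracking a principal part. Throughout I will use that $\zeta_W$ has only simple poles, with $\zeta_W(u-u_\ell)=(u-u_\ell)^{-1}+O(1)$ as $u\to u_\ell$, so that $A$ and $B$ from \eqref{eq:A-B-def} are meromorphic with at worst simple poles located exactly at the points $\{u_\ell\}_{\ell\in I_{\mathrm{scat}}}$ and with $\Res_{u=u_\ell}A=\alpha_\ell$, $\Res_{u=u_\ell}B=\beta_\ell$ (the constant subtractions $\zeta_W(u_0-u_\ell)$ being irrelevant to residues). The injectivity of the uniformization \eqref{eq:uniformization} guarantees the $u_\ell$ are pairwise distinct and all distinct from $u_0$: the forcing poles sit at $t_\ell\neq0$, whereas $u_0$ is the basepoint $t=0$. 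Hence these poles are isolated, and $u_0$ is a regular point of every $\zeta_W(\cdot-u_\ell)$.

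\textbf{Part (i).} Fix $\ell\in I_{\mathrm{scat}}$. Since $t_\ell\neq0$, the prefactor $1/(4t(u)^4)$, the channel symbol $\beta_{\mathrm{ch}}(t(u),Y(u))$ from \eqref{eq:beta-channel}, and the polynomials $p(t(u)),q(t(u))$ are all analytic in a neighbourhood of $u_\ell$. Therefore the only singular contributions to $P_{1,3}$ in \eqref{eq:P13} near $u_\ell$ come from the simple poles of $A$ and $B$, and multiplying a simple pole by a locally analytic factor keeps it simple, with the factor evaluated at $u_\ell$. This gives $\Res_{u=u_\ell}P_{1,3}=\bigl(\alpha_\ell-\beta_{\mathrm{ch}}(t_\ell,Y_\ell)\beta_\ell\bigr)/(4t_\ell^4)=d_\ell$, which is precisely \eqref{eq:d-def}. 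In the subtracted sum of \eqref{eq:R-def} only the $\ell$-term is singular at $u_\ell$, contributing a simple pole of residue $d_\ell$; hence the principal parts cancel and $R$ is regular at $u_\ell$.

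\textbf{Part (ii).} Here I pass to the local coordinate $t$, which is legitimate because $u\mapsto t$ is a local biholomorphism at $u_0$: by \eqref{eq:dtdu}, $t'(u_0)=-Y(u_0)=\sqrt2\neq0$ on $\Omega_{\mathrm{phys}}^+$. The jet-killing identities \eqref{eq:jet-cancel-A} and \eqref{eq:jet-cancel-B} state that the numerators $A(u(t))+p(t)$ and $B(u(t))+q(t)$, each holomorphic at $u_0$, vanish to order at least four at $t=0$; dividing by $4t^4$ therefore yields functions holomorphic across $t=0$. Since $\beta_{\mathrm{ch}}(t,Y)\to\ii$ as $t\to0$ (using $Y\to-\sqrt2$ from \eqref{eq:physical-component}) and is analytic there, $P_{1,3}$ extends holomorphically to $u_0$. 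Finally, each bracket $\zeta_W(u-u_\ell)-\zeta_W(u_0-u_\ell)$ in the subtracted sum is analytic at $u_0$ (its pole sits at $u_\ell\neq u_0$) and vanishes there, so the subtracted sum is analytic at $u_0$; consequently $R$ is analytic at $u_0$.

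\textbf{Main obstacle.} The genuinely substantive input is the order-four vanishing \eqref{eq:jet-cancel-A}--\eqref{eq:jet-cancel-B} imported from Section~\ref{sec:jetkilling}: without it the division by $4t^4$ would leave a pole of order up to four at the basepoint. The delicate point there---already settled by \eqref{eq:delta-vs-t}, which rules out an odd $t^3$ term via $\delta=t/\sqrt2+O(t^5)$---is that the cancellation must reach order $t^4$ in \emph{both} channels simultaneously, matching the four Laurent coefficients killed by $(p,q)$. The remaining work in the present lemma is bookkeeping: confirming that every prefactor which could manufacture a spurious pole is analytic at the point in question (at $u_\ell$ because $t_\ell\neq0$; at $u_0$ because $\beta_{\mathrm{ch}}$ is finite and the jet numerators vanish to order four). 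I expect no essential obstacle beyond carefully separating the role of the forcing poles $u_\ell$ from that of the basepoint $u_0$.
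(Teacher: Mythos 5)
Your proposal is correct and follows essentially the same route as the paper's proof: part (i) cancels the simple principal part $d_\ell/(u-u_\ell)$ of $P_{1,3}$ against the subtracted zeta difference, and part (ii) combines the jet-killing estimates \eqref{eq:jet-cancel-A}--\eqref{eq:jet-cancel-B} with the vanishing of the zeta differences at $u_0$. The extra bookkeeping you supply (analyticity of the prefactors at $u_\ell$ since $t_\ell\neq0$, and the local biholomorphism $t'(u_0)=-Y(u_0)\neq0$) is a harmless elaboration of what the paper leaves implicit.
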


\begin{proof}
(i) By definition \eqref{eq:d-def}, the principal part of $P_{1,3}$ at $u_\ell$ is $d_\ell/(u-u_\ell)$.
The zeta difference $\zeta_W(u-u_\ell)-\zeta_W(u_0-u_\ell)$ has principal part $1/(u-u_\ell)$.
Subtracting $d_\ell[\cdot]$ cancels the principal part at each $u_\ell$.

(ii) By jet cancellation \eqref{eq:jet-cancel-A}--\eqref{eq:jet-cancel-B}, $A(u(t))+p(t)=O(t^4)$ and $B(u(t))+q(t)=O(t^4)$ as $t\to0$,
so each fraction in \eqref{eq:P13} is analytic at $t=0$. Each zeta difference vanishes at $u=u_0$, hence is analytic there.
Therefore $R$ is analytic at $u_0$.
\end{proof}

\subsection{Canonical no-double-counting representation}
\begin{theorem}[Canonical decomposition of $Q_{\mathrm{scat}}$]\label{thm:nodoublecount}
Assume the analytic strip and growth framework of \S\ref{sec:sommerfeld} and the uniqueness principle of Lemma~\ref{lem:sommerfeld-nullity}.
Let the forcing pole set $\{u_\ell\}$ be as in \S\ref{sec:poles}. Let the coefficient tables $(\alpha_\ell)$, $(\beta_\ell)$ and $(C_\ell)$ be
given by Propositions~\ref{prop:alpha-table}--\ref{prop:C-table}, and define $(d_\ell)$ by Proposition~\ref{prop:d-table}. Set
\begin{equation}\label{eq:Qscat-canonical}
Q_{\mathrm{scat}}(u)=\sum_{\ell\in I_{\mathrm{scat}}}C_\ell\big[\zeta_W(u-u_\ell)-\zeta_W(u_0-u_\ell)\big]+R(u),
\end{equation}
where $R$ is defined by \eqref{eq:R-def}. Then $Q_{\mathrm{scat}}$ has poles exactly at the forcing points $u_\ell$ ($\ell\in I_{\mathrm{scat}}$)
with residues $C_\ell$, and the remainder $R$ is pole-free at every $u_\ell$ and analytic at $u_0$.
In particular, the decomposition contains no double counting of principal parts from the singular channel $P_{1,3}$.
\end{theorem}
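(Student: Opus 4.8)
The plan is to obtain Theorem~\ref{thm:nodoublecount} as a direct assembly of Lemma~\ref{lem:R-analytic} with an elementary principal-part count for the Weierstrass zeta differences. Lemma~\ref{lem:R-analytic} already supplies the two hard analytic facts---that $R$ is pole-free at each $u_\ell$ and analytic at $u_0$---so the remaining work is (i) to verify that these are the \emph{only} points at which $R$ could possibly be singular, (ii) to read off the residues of $Q_{\mathrm{scat}}$, and (iii) to articulate precisely the ``no double counting'' clause.

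First I would record the local structure of each zeta difference. By oddness of $\zeta_W$ its Laurent expansion contains only odd powers, so $\zeta_W(v)=v^{-1}+O(v^3)$ as $v\to0$; hence $\zeta_W(u-u_\ell)-\zeta_W(u_0-u_\ell)$ is meromorphic on the physical cut domain with a single simple pole at $u=u_\ell$ of residue $1$, is regular at every other point, and vanishes at $u=u_0$ (the constant $\zeta_W(u_0-u_\ell)$ being finite since $u_\ell\neq u_0$). Consequently $S_C(u):=\sum_{\ell\in I_{\mathrm{scat}}}C_\ell[\zeta_W(u-u_\ell)-\zeta_W(u_0-u_\ell)]$ has simple poles exactly at the $u_\ell$ with residues $C_\ell$, is regular elsewhere, and satisfies $S_C(u_0)=0$.

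Next I would pin down the full pole locus of $R$. From \eqref{eq:R-def} the subtracted sum $\sum_\ell d_\ell[\zeta_W(u-u_\ell)-\zeta_W(u_0-u_\ell)]$ is singular only at the $u_\ell$, so it suffices to control $P_{1,3}$. By \eqref{eq:P13} and \eqref{eq:A-B-def}, $P_{1,3}$ is built from the finite Weierstrass-$\zeta$ sums $A,B$---whose poles on $\Omega_{\mathrm{phys}}^+$ sit exactly at the $u_\ell$---together with the rational factor $1/(4t^4)$ and the polynomial symbol $\beta_{\mathrm{ch}}(t,Y)=\tfrac{\ii Y}{\sqrt2}(t^2-1)$. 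The factor $1/(4t^4)$ is singular only where $t(u)=0$, i.e.\ at $u=u_0$, and $\beta_{\mathrm{ch}}$ is regular throughout. Thus the only candidate poles of $R$ are the $u_\ell$ and $u_0$, and Lemma~\ref{lem:R-analytic}(i),(ii) removes both; therefore $R$ is analytic on the whole physical cut domain.

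Finally I would assemble the two pieces: near each $u_\ell$ one has $Q_{\mathrm{scat}}(u)=S_C(u)+R(u)=C_\ell/(u-u_\ell)+O(1)$, since $R$ is regular there while $S_C$ carries the simple pole, and away from $\{u_\ell\}$ both summands are analytic; this gives the asserted pole set and residues. For the ``no double counting'' clause I would spell out the accounting that motivates \eqref{eq:R-def}: the singular channel $P_{1,3}$ carries principal parts $d_\ell/(u-u_\ell)$, which are subtracted exactly once in the definition of $R$, while the physical residues $C_\ell$ are supplied by the separate sum $S_C$; had one instead set $Q_{\mathrm{scat}}=S_C+P_{1,3}$ the residue at $u_\ell$ would be the spurious $C_\ell+d_\ell$. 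I expect the one genuinely load-bearing step to be the pole-locus analysis of the third paragraph---in particular confirming that the uniformization $t(u)$ and the branch selection introduce no hidden singularity on $\Omega_{\mathrm{phys}}^+$ beyond $u_0$ and the forcing points---whereas the residue bookkeeping itself is immediate.
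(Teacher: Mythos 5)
Your proposal is correct and follows essentially the same route as the paper, which states Theorem~\ref{thm:nodoublecount} without a separate proof precisely because it is the assembly of Lemma~\ref{lem:R-analytic} with the elementary residue count for the zeta differences that you carry out. Your added third paragraph---verifying that the only candidate singularities of $R$ on $\Omega_{\mathrm{phys}}^+$ are the forcing points and $u_0$ (via the locus $t(u)=0$ of the factor $1/(4t^4)$ and the regularity of $\beta_{\mathrm{ch}}$)---is a useful explicit step that the paper leaves implicit, and the rest of your bookkeeping matches the intended argument.
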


\begin{theorem}[Uniqueness in the Sommerfeld class]\label{thm:uniqueness}
Assume the analytic and growth hypotheses on Sommerfeld densities from Lemma~\ref{lem:sommerfeld-nullity}, together with the gauge normalization
$Q_{\mathrm{scat}}(u_0)=0$. Let $Q_{\mathrm{scat}}$ be the scattered spectral density constructed in Theorem~\ref{thm:nodoublecount}.
If $\widetilde Q_{\mathrm{scat}}$ is any other meromorphic function on the physical branch with at most simple poles at the forcing points
$\{u_\ell:\ell\in I_{\mathrm{scat}}\}$, satisfying the same spectral functional system and scattered allocation, and the same gauge normalization
$\widetilde Q_{\mathrm{scat}}(u_0)=0$, then $\widetilde Q_{\mathrm{scat}}\equiv Q_{\mathrm{scat}}$.
\end{theorem}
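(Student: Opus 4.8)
The plan is to run the standard Liouville argument for elliptic functions on the difference $D:=\widetilde Q_{\mathrm{scat}}-Q_{\mathrm{scat}}$, reducing uniqueness to the vanishing of its residues. First I would record the structural properties of $D$. By hypothesis $\widetilde Q_{\mathrm{scat}}$, and by Theorem~\ref{thm:nodoublecount} also $Q_{\mathrm{scat}}$, are meromorphic on the physical branch with at most simple poles confined to the forcing set $\{u_\ell:\ell\in I_{\mathrm{scat}}\}$, both are analytic at the incident point $\zeta=\zeta_i$ (scattered allocation), both satisfy the face-coupling relation \eqref{eq:face-coupling}, and both are normalized by value $0$ at $u_0$. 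Subtracting, $D$ is an elliptic function on the square-lattice torus with at most simple poles at the $u_\ell$, it is analytic at $\zeta_i$ and at $u_0$, it satisfies the \emph{homogeneous} face-coupling system since the inhomogeneity is carried entirely by the common $Q_{\mathrm{inc}}=(\zeta-\zeta_i)^{-1}$ and cancels, and $D(u_0)=0$.

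The crux is to show $\Res_{u=u_\ell}D=0$ for every $\ell\in I_{\mathrm{scat}}$. Read at the level of principal parts, the relation \eqref{eq:face-coupling} transports the pole data of $Q$ into that of $S$ through the Snell map and the four orbit branches \eqref{eq:orbit-branches}, and consistency across the two faces $\theta_b=\pm\theta_w$ closes this transport into a finite linear system for the residues, seeded by the unit residue of the common $Q_{\mathrm{inc}}$ at $\ell_{\mathrm{inc}}=(0,+,-)$. The derivation behind Proposition~\ref{prop:C-table} (Appendix~A) solves this system uniquely, yielding the tabulated $C_\ell$. For $D$ the seed is absent, because scattered allocation makes both $Q_{\mathrm{scat}}$ and $\widetilde Q_{\mathrm{scat}}$ analytic at $\zeta_i$; hence the residues of $D$ satisfy the associated homogeneous system, whose only solution is trivial by the very nonsingularity underlying that unique solvability. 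Equivalently, both solutions necessarily carry the principal parts $C_\ell/(u-u_\ell)$ of Proposition~\ref{prop:C-table}, so these cancel in $D$. Here the injectivity of the uniformization \eqref{eq:uniformization} is what keeps the fifteen points $u_\ell$ distinct on the torus, making the residue system well posed, while the coupling matrix in \eqref{eq:face-coupling} has determinant $w_b'(z)$, nonvanishing on $\Omega_{\mathrm{phys}}^+$.

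With all residues killed, $D$ has only removable singularities at the $u_\ell$, so it extends to a holomorphic function on the compact torus, i.e.\ a pole-free elliptic function, and is therefore constant by Liouville's theorem for elliptic functions. The shared gauge $D(u_0)=\widetilde Q_{\mathrm{scat}}(u_0)-Q_{\mathrm{scat}}(u_0)=0$ forces this constant to vanish, giving $\widetilde Q_{\mathrm{scat}}\equiv Q_{\mathrm{scat}}$. That a nonzero constant is genuinely a homogeneous solution, consistent with Remark~\ref{rem:gauge}, is exactly why the gauge is indispensable: substituting $Q\equiv c$ into \eqref{eq:face-coupling} returns $S\equiv c$, so constants lie in the kernel and must be removed by normalization rather than by the functional system.

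I expect the main obstacle to be the residue-vanishing step: one must argue that the homogeneous residue transport admits no nontrivial solution with poles only on $\{u_\ell\}$. The cleanest route is to invoke the nonsingularity already implicit in the unique solvability behind Proposition~\ref{prop:C-table}, rather than re-deriving it. The two subtleties to guard against are a label collision on the torus, which the injectivity claim for \eqref{eq:uniformization} is precisely designed to prevent, and a degenerate face at which $w_b'$ vanishes, which does not occur on the physical component.
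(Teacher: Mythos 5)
Your proposal is correct and follows essentially the same route as the paper: form the difference, note it satisfies the homogeneous system so that the uniquely determined principal parts (from the Appendix~A residue relations) cancel, conclude it is a holomorphic elliptic function hence constant by Liouville, and kill the constant with the gauge $Q_{\mathrm{scat}}(u_0)=0$. Your extra remarks on why the residue transport is nonsingular (injectivity of the uniformization, nonvanishing of $w_b'$) are a more explicit version of what the paper delegates to Appendix~A, not a different argument.
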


\begin{proof}
Let $H:=\widetilde Q_{\mathrm{scat}}-Q_{\mathrm{scat}}$.
By linearity of the functional system, $H$ satisfies the associated homogeneous system (zero forcing), and by the local residue relations in
Appendix~A its principal parts at each forcing point are uniquely determined.
Since both $\widetilde Q_{\mathrm{scat}}$ and $Q_{\mathrm{scat}}$ satisfy the same residue tables (Propositions~\ref{prop:alpha-table}--\ref{prop:C-table}),
the difference $H$ is analytic at every $u=u_\ell$ ($\ell\in I_{\mathrm{scat}}$).
Moreover, $H$ has no jump across the contour system defining the additive Riemann--Hilbert problem, so the uniformization $u\mapsto(t,Y)$ implies that
$H$ extends to a holomorphic elliptic function on the square torus. A holomorphic elliptic function is constant, hence $H\equiv c$.
Finally, the gauge normalization gives $c=H(u_0)=0$, so $H\equiv0$.
If the gauge is not imposed then $c$ is the only remaining ambiguity; this constant does not affect the Sommerfeld integrals because they involve the
difference $Q(\theta+z)-Q(\theta-z)$.
\end{proof}

\section{Analyticity at the incident spectral point}\label{sec:incident}
\begin{theorem}[Analyticity at the incident spectral point]\label{thm:incident-analyticity}
Assume the analytic strip and growth framework of \S\ref{sec:sommerfeld} and the uniqueness principle of Lemma~\ref{lem:sommerfeld-nullity}.
Let the forcing pole set $\{u_\ell\}$ be as in \S\ref{sec:poles}, and impose the scattered allocation by excluding the incident label
$\ell_{\mathrm{inc}}=(0,+,-)$ from the inside set, i.e.\ $I_{\mathrm{scat}}=I\setminus\{\ell_{\mathrm{inc}}\}$.
Then the scattered spectral density $Q_{\mathrm{scat}}(\zeta)$ is analytic at the incident spectral point
$\zeta=\zeta_i=\theta_i+\ii\varepsilon$ (for each fixed $\varepsilon>0$).
\end{theorem}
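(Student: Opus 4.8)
The plan is to reduce analyticity at $\zeta=\zeta_i$ to a separation statement on the uniformizing torus and then transfer it back through the physical branch map. First I would catalogue the only possible singularities of the closed form \eqref{eq:Qscat-canonical}: the explicit $C_\ell$ zeta-difference terms are singular only at the points $u_\ell$ with $\ell\in I_{\mathrm{scat}}$; inside $R$ (from \eqref{eq:R-def}) the $d_\ell$-sum has poles only at those same $u_\ell$; and the singular channel $P_{1,3}$ of \eqref{eq:P13} can be singular only where $A$ or $B$ are (again the $u_\ell$, $\ell\in I_{\mathrm{scat}}$) or where the $1/(4t^4)$ factor blows up, i.e.\ at $t=0$, the basepoint $u_0$. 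Lemma~\ref{lem:R-analytic} already establishes analyticity of $R$ at every forcing point $u_\ell$ and at $u_0$, so the whole question collapses to the behaviour of $Q_{\mathrm{scat}}$ at the single torus point $u(\zeta_i)$.

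Second, I would locate $u(\zeta_i)$ precisely. By \eqref{eq:s-zeta} the spectral exponential at $\zeta_i$ is $s_{\zeta_i}=\exp(\ii(\zeta_i-\theta_w))$, which is exactly the forcing phase $b_{0,+,-}$ of \eqref{eq:forcing-phases} for the incident label $\ell_{\mathrm{inc}}=(0,+,-)$ (here $m=0$, so the exponent $(-1)^m$ is trivial). The physical branch selection \eqref{eq:physical-branch-map} then picks the same in-disk root and the same $(t,Y)$ that Lemma~\ref{lem:pole-points} together with the trivial transport \eqref{eq:transport} (with $m=0$) assigns to $\ell_{\mathrm{inc}}$; composing with the uniformization \eqref{eq:u-ell} gives $u(\zeta_i)=u_{\ell_{\mathrm{inc}}}$. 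In particular $u_{\ell_{\mathrm{inc}}}\neq u_0$ since $\varepsilon>0$ keeps the incident point away from the apex $t=0$.

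Third, and this is the crux, I would show that no label in $I_{\mathrm{scat}}$ collides with $u_{\ell_{\mathrm{inc}}}$ on the torus. Because $\ell_{\mathrm{inc}}\notin I_{\mathrm{scat}}=I\setminus\{\ell_{\mathrm{inc}}\}$ (see \eqref{eq:Iscat}), the only mechanism that could reintroduce a pole at $u_{\ell_{\mathrm{inc}}}$ is a \emph{label collision}: some $\ell\in I_{\mathrm{scat}}$ with $u_\ell=u_{\ell_{\mathrm{inc}}}$. This is precisely what the injectivity of the birational uniformization \eqref{eq:uniformization} rules out: injectivity forces $u_\ell=u_{\ell_{\mathrm{inc}}}$ to imply equality of points $(t_\ell,Y_\ell)=(t_{\ell_{\mathrm{inc}}},Y_{\ell_{\mathrm{inc}}})$ on $\Sigma_{\mathrm{lem}}$, and I would verify from the explicit transport \eqref{eq:transport} and the root pair $t_\pm$ of \eqref{eq:tpm} that for $\varepsilon>0$ (and for $\theta_i$ away from any exceptional degeneracy) no scattered label realises this point. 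The dangerous near-collider on a naive $2{:}1$ model is the antipode $(-t_{\ell_{\mathrm{inc}}},-Y_{\ell_{\mathrm{inc}}})$, and the entire argument hinges on \eqref{eq:uniformization} keeping $(t,Y)$ and $(-t,-Y)$ distinct, exactly the injectivity emphasised there.

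Finally, I would assemble the pieces. With $u(\zeta_i)=u_{\ell_{\mathrm{inc}}}$ separated from every $u_\ell$, $\ell\in I_{\mathrm{scat}}$, each $C_\ell$- and $d_\ell$-term is analytic at $u_{\ell_{\mathrm{inc}}}$, and $P_{1,3}$ is analytic there because $A$ and $B$ omit $\ell_{\mathrm{inc}}$ while the $1/t^4$ pole sits at $u_0\neq u_{\ell_{\mathrm{inc}}}$; hence $Q_{\mathrm{scat}}$ is analytic at $u_{\ell_{\mathrm{inc}}}$. Since $\varepsilon>0$ places $(t_{\ell_{\mathrm{inc}}},Y_{\ell_{\mathrm{inc}}})$ in the interior of $\Omega_{\mathrm{phys}}^+$ away from the ramification points $t\in\{0,\pm1\}$, the composite map $\zeta\mapsto(t(\zeta),Y(\zeta))\mapsto u(\zeta)$ is a local biholomorphism at $\zeta_i$, so analyticity in $u$ transfers to analyticity in $\zeta$ at $\zeta=\zeta_i$. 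I expect the main obstacle to be the collision check of the third step: it is where the injective uniformization is indispensable, since on the naive model the antipodal identification would reintroduce a pole and the claim would fail.
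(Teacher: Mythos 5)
Your proposal is correct and follows essentially the same route as the paper's proof: the canonical decomposition of Theorem~\ref{thm:nodoublecount} localizes all poles of $Q_{\mathrm{scat}}$ at the scattered forcing points (with $u_0$ handled by Lemma~\ref{lem:R-analytic}), the incident label is excluded from $I_{\mathrm{scat}}$, the injectivity of the uniformization \eqref{eq:uniformization} prevents a label collision from reintroducing a pole at $u_{\ell_{\mathrm{inc}}}$, and analyticity then transfers back to $\zeta=\zeta_i$ through the physical lift. The only difference is one of completeness: you make explicit the identification $u(\zeta_i)=u_{\ell_{\mathrm{inc}}}$, the collision check, and the local-biholomorphism transfer, all of which the paper's terser proof leaves implicit.
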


\begin{proof}
By Theorem~\ref{thm:nodoublecount}, the function $Q_{\mathrm{scat}}$ admits the canonical decomposition \eqref{eq:Qscat-canonical},
where each zeta difference has a simple pole only at $u=u_\ell$ and the remainder $R$ is pole-free at every forcing point.
Since $\ell_{\mathrm{inc}}\notin I_{\mathrm{scat}}$, no term in the zeta sum has a pole at the incident point $u=u_{\ell_{\mathrm{inc}}}$.
Moreover, the definitions of $A(u)$, $B(u)$ and hence of $P_{1,3}(u)$ and $R(u)$ involve sums only over $I_{\mathrm{scat}}$, so $R(u)$ is analytic at
$u=u_{\ell_{\mathrm{inc}}}$ as well. Therefore $Q_{\mathrm{scat}}$ is analytic at $u=u_{\ell_{\mathrm{inc}}}$, and hence, by the physical lift
$u=u(\zeta)$ and the injectivity of the uniformization \eqref{eq:uniformization}, analytic at $\zeta=\zeta_i$.
\end{proof}

\section{Radiation condition and Meixner edge condition in the spectral formulation}\label{sec:radiation}
The Sommerfeld representations \eqref{eq:sommerfeld-medium0}--\eqref{eq:sommerfeld-medium1} are classical in wedge diffraction.
For completeness we record sufficient hypotheses on the spectral densities $Q$ and $S$ ensuring (i) the Sommerfeld radiation condition as $r\to\infty$
and (ii) the Meixner finite-energy edge condition at the wedge apex $r\to0$.
Statements of this type are standard; see, for example, Noble \cite[Chs.~2--4]{Noble1958} and Rawlins \cite[\S3]{Rawlins1999}
(and the original half-plane analysis of Sommerfeld \cite{Sommerfeld1896}).

\begin{proposition}[Radiation and Meixner conditions from Sommerfeld data]\label{prop:radiation-meixner}
Assume that there exists $\eta>0$ such that the densities $Q(\zeta)$ and $S(\zeta)$ are meromorphic in the strip
\begin{equation}\label{eq:strip}
S_\eta:=\{\zeta\in\mathbb{C}:|\Im\zeta|<\eta\},
\end{equation}
with at most finitely many simple poles, all displaced away from the integration contour $\gamma$ by the limiting-absorption prescription
$\varepsilon>0$. Assume also that for some constants $C,N$ one has the uniform growth bound
\begin{equation}\label{eq:growth}
|Q(\zeta)|+|S(\zeta)|\le C(1+|\zeta|)^N,\qquad \zeta\in S_\eta\setminus\{\text{poles}\}.
\end{equation}
Finally, assume a gauge normalization on the physical branch, for example
\begin{equation}\label{eq:growth-gauge}
Q_{\mathrm{scat}}(u_0)=0,\qquad\text{equivalently }Q_{\mathrm{scat}}(\zeta)\to0\ \text{as }\Im\zeta\to+\infty,
\end{equation}
(and likewise for $S_{\mathrm{scat}}$).
Then the Sommerfeld integrals \eqref{eq:sommerfeld-medium0}--\eqref{eq:sommerfeld-medium1} define classical solutions of the Helmholtz equations
in their respective sectors, satisfy the Sommerfeld radiation condition as $r\to\infty$, and satisfy the Meixner finite-energy condition
at the wedge apex $r\to0$.
\end{proposition}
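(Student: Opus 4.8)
The plan is to verify the three assertions in turn for the exterior integral \eqref{eq:sommerfeld-medium0}; the transmitted integral \eqref{eq:sommerfeld-medium1} is handled identically with $k_0$ replaced by $k_1$. Two facts drive everything: the Sommerfeld contour $\gamma$ is chosen so that for $r>0$ the kernel $e^{\ii k_0 r\cos z}$ decays exponentially along its tails (which run into the regions where $\Im\cos z>0$), while the densities obey the polynomial bound \eqref{eq:growth}. Together these make the integral, and each of its formal $r$- and $\theta$-derivatives, absolutely and locally uniformly convergent, so differentiation under the integral sign and integration by parts in $z$ are both legitimate and produce no boundary contributions at the ends of $\gamma$.

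First I would show $(\Delta+k_0^2)u^{(0)}=0$ in the open exterior sector. Since the angular dependence enters only through $Q(\theta\pm z)$, we have $\partial_\theta^2\bigl(Q(\theta+z)-Q(\theta-z)\bigr)=\partial_z^2\bigl(Q(\theta+z)-Q(\theta-z)\bigr)$, so the $r^{-2}\partial_\theta^2$ term can be rewritten as a double $z$-derivative acting on the density. Integrating by parts twice transfers these derivatives onto the kernel, and a one-line computation gives the identity
\[
\Bigl(\partial_r^2+\tfrac1r\partial_r+k_0^2\Bigr)e^{\ii k_0 r\cos z}=\Bigl(k_0^2\sin^2 z+\tfrac{\ii k_0\cos z}{r}\Bigr)e^{\ii k_0 r\cos z}=-\tfrac{1}{r^2}\,\partial_z^2\,e^{\ii k_0 r\cos z}.
\]
Hence the radial operator and the angular operator contribute opposite integrands and cancel, giving $(\Delta+k_0^2)u^{(0)}=0$. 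This is precisely the classical statement that the Sommerfeld kernel is a Helmholtz eigenfunction; the only nontrivial input is the vanishing of the $z$-boundary terms, supplied by the kernel decay.

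Next I would treat the radiation condition by steepest descent as $r\to\infty$. The contour $\gamma$ is taken through the saddle $z=0$ of $z\mapsto\cos z$ along its steepest-descent path for $e^{\ii k_0 r\cos z}$, which yields the outgoing cylindrical asymptotics of the scattered part, $u^{(0)}_{\mathrm{scat}}\sim D(\theta,\theta_i)\,e^{\ii k_0 r}/\sqrt{r}$ with $D$ evaluated from the saddle value $Q_{\mathrm{scat}}(\theta)$ --- this is the mechanism of Theorem~\ref{thm:farfield}. The geometric-optics terms (incident, reflected, transmitted) arise separately as residues of the poles crossed while deforming the physical contour onto the steepest-descent path; because the limiting-absorption displacement $\varepsilon>0$ keeps all density poles strictly off $\gamma$, no spurious incoming-wave $e^{-\ii k_0 r}$ component is generated, and the gauge \eqref{eq:growth-gauge} fixes the decay of $Q_{\mathrm{scat}}$ as $\Im\zeta\to+\infty$ and hence the correct normalization. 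This delivers the Sommerfeld radiation condition for the scattered field.

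Finally, for the Meixner condition at $r\to0$ I would pass to a Mellin/scaling analysis of \eqref{eq:sommerfeld-medium0} in $r$, since here the kernel provides no damping and the naive Taylor expansion $e^{\ii k_0 r\cos z}=1+\ii k_0 r\cos z+\cdots$ produces $z$-integrals that need not converge under \eqref{eq:growth}. The small-$r$ asymptotics are instead governed by the behaviour of the density at the ends of the strip, and one extracts a leading edge exponent $\tau>0$, whence $u^{(0)}=O(r^\tau)$, $\nabla u^{(0)}=O(r^{\tau-1})$, and the finite-energy integral $\int_{r<1}(|u^{(0)}|^2+|\nabla u^{(0)}|^2)\,r\,\dd r\,\dd\theta<\infty$. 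The main obstacle I anticipate is exactly this edge step: unlike the large-$r$ regime there is no exponential damping, so positivity of the edge exponent (and hence $H^1_{\mathrm{loc}}$ integrability at the apex) must be read off purely from the analytic structure of the density and reconciled with the admissible growth rate $N$ in \eqref{eq:growth}. Since the densities constructed here are meromorphic with only finitely many simple poles, all displaced off $\gamma$, the growth exponent can be taken minimal and the standard wedge edge estimates of \cite[Chs.~2--4]{Noble1958} and \cite[\S3]{Rawlins1999} apply; invoking those closes the argument, and the identical estimates give the edge and radiation behaviour of $u^{(1)}$.
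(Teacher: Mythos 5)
Your proposal is correct and follows the same three-part skeleton as the paper's proof (Helmholtz equation via absolute convergence and integration by parts along $\gamma$, radiation condition via steepest descent through the saddle $z=0$, Meixner condition deferred in the end to the classical estimates of Noble and Rawlins). Two points of genuine divergence are worth recording. For the Helmholtz step you supply the explicit kernel identity
\[
\Bigl(\partial_r^2+\tfrac1r\partial_r+k_0^2\Bigr)e^{\ii k_0 r\cos z}=-\tfrac{1}{r^2}\,\partial_z^2\,e^{\ii k_0 r\cos z},
\]
together with the double integration by parts that cancels it against the angular term; the paper only asserts that differentiation under the integral sign suffices, so your version is the more complete argument (and your identity checks out). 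For the Meixner step you reject the uniform Taylor expansion $e^{\ii k_0 r\cos z}=1+O(r)$ on $\gamma$ and propose a Mellin/scaling analysis instead; this objection is in fact well taken \emph{against the paper's own proof}, since $\cos z$ grows exponentially on the tails of the Sommerfeld contour and the $O(r)$ remainder is not uniform there --- the paper's contour shift ``upward'' is doing the real work and is stated rather than carried out. That said, you do not carry out the Mellin analysis either: the positivity of the edge exponent $\tau$ is asserted and then outsourced to \cite[Chs.~2--4]{Noble1958} and \cite[\S3]{Rawlins1999}, which is exactly the same level of completeness as the paper. Net effect: your route is marginally more careful on the PDE verification, equally reliant on the literature for the edge condition, and identical on the radiation condition.
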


\begin{proof}
Under \eqref{eq:strip}--\eqref{eq:growth} the contour $\gamma$ can be deformed within the strip to the standard pair of rays with
$\Im z>0$ and $\Im z<0$ without crossing singularities (cf.\ \cite[Ch.~2]{Noble1958}). The resulting integrals converge absolutely and allow
differentiation under the integral sign; hence the reconstructed fields solve the Helmholtz equations in each sector.

For $r\to\infty$, steepest descent on the phase $\cos z$ along the deformed contour yields an outgoing leading term proportional to
$e^{\ii k r}/\sqrt{r}$, with remainder $o(r^{-1/2})$; see, for example, Bleistein--Handelsman \cite[Ch.~6]{BleisteinHandelsman1986} or
Wong \cite[\S2.4]{Wong2001}. The outgoing far-field expansion implies the Sommerfeld radiation condition.

For $r\to0$, one expands $e^{\ii k r\cos z}=1+O(r)$ uniformly on $\gamma$ and uses \eqref{eq:growth-gauge} together with strip analyticity
to shift the contour upward, obtaining boundedness of $u$ and its first derivatives in a neighborhood of the apex; boundedness of $u$ and
$\nabla u$ implies the Meixner finite-energy condition (see \cite[\S3]{Rawlins1999} and \cite[Ch.~3]{Noble1958}).
\end{proof}

\begin{corollary}\label{cor:radiation-meixner-constructed}
The densities $Q_{\mathrm{scat}}$ and $S_{\mathrm{scat}}$ constructed in \S\ref{sec:singularchannel} satisfy the hypotheses of
Proposition~\ref{prop:radiation-meixner}. Consequently, the fields reconstructed by \eqref{eq:sommerfeld-medium0}--\eqref{eq:sommerfeld-medium1}
satisfy the Sommerfeld radiation condition and the Meixner edge condition.
\end{corollary}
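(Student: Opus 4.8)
The plan is to verify, in turn, the three structural hypotheses of Proposition~\ref{prop:radiation-meixner} for the explicit density $Q_{\mathrm{scat}}$ of \S\ref{sec:singularchannel}, and then to transfer them to $S_{\mathrm{scat}}$ through the face-coupling relation \eqref{eq:face-coupling}. For the first hypothesis (strip-meromorphy with finitely many simple poles off the contour) I would invoke Theorem~\ref{thm:nodoublecount}: $Q_{\mathrm{scat}}$ has poles exactly at the $15$ forcing points $u_\ell$, all simple, while the remainder $R$ is pole-free there and analytic at $u_0$ (Lemma~\ref{lem:R-analytic}). Pulling back along the physical branch map $\zeta\mapsto u(\zeta)$, which is analytic on the Sommerfeld strip, transports these into finitely many simple poles of $Q_{\mathrm{scat}}(\zeta)$ in $S_\eta$; and since every forcing phase \eqref{eq:forcing-phases} carries the limiting-absorption datum $\zeta_i=\theta_i+\ii\varepsilon$ with $\varepsilon>0$, together with the incident-label exclusion of Theorem~\ref{thm:incident-analyticity}, all of these poles are displaced off $\gamma$, which is \eqref{eq:strip}. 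The gauge/decay condition \eqref{eq:growth-gauge} is then immediate: as $\Im\zeta\to+\infty$ the spectral exponential \eqref{eq:s-zeta} satisfies $s_\zeta\to0$, so on the physical branch $t(\zeta)\to0$ and $u(\zeta)\to u_0$, whence analyticity of $Q_{\mathrm{scat}}$ at $u_0$ together with $Q_{\mathrm{scat}}(u_0)=0$ forces $Q_{\mathrm{scat}}(\zeta)\to0$.

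The main obstacle is the polynomial growth bound \eqref{eq:growth}, which concerns the unbounded directions $\Re\zeta\to\pm\infty$ and therefore requires the global periodicity structure of the branch map. The key observation is that $s_\zeta$ is $2\pi$-periodic in $\zeta$, so the analytic lift obeys $u(\zeta+2\pi)=u(\zeta)+\omega$ for a fixed monodromy period $\omega$ in the lattice $\Lambda$. Using $\zeta_W(u+\omega)=\zeta_W(u)+\eta_\omega$, with $\eta_\omega$ the corresponding Weierstrass quasi-period, in the canonical representation \eqref{eq:Qscat-canonical} gives the shift relation $Q_{\mathrm{scat}}(\zeta+2\pi)=Q_{\mathrm{scat}}(\zeta)+\eta_\omega\sum_{\ell\in I_{\mathrm{scat}}}C_\ell$. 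On one period cell $\{0\le\Re\zeta<2\pi,\ |\Im\zeta|\le\eta'\}$ with $\eta'<\eta$ the function is bounded away from its finitely many pole neighborhoods, so iterating the shift relation bounds $|Q_{\mathrm{scat}}(\zeta)|$ by $O(1+|\Re\zeta|)$ throughout the strip; thus \eqref{eq:growth} holds with $N\le1$, and with $N=0$ in the genuinely elliptic case $\sum_{\ell}C_\ell=0$ (consistent with the holomorphic-elliptic conclusion used in Theorem~\ref{thm:uniqueness}). The one point I would check carefully inside this step is that the remainder $R$ introduces no poles on the image of $\gamma$ under the lift: Lemma~\ref{lem:R-analytic} controls $R$ only at the $u_\ell$ and at $u_0$, so I would verify directly from \eqref{eq:P13}--\eqref{eq:R-def} that the denominator $t^4$ vanishes only at $u_0$---where the jet cancellations \eqref{eq:jet-cancel-A}--\eqref{eq:jet-cancel-B} remove the singularity---and that the $\zeta_W$-differences add no poles along the contour.

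Finally, I would transfer the three properties to $S_{\mathrm{scat}}$ using Proposition~\ref{prop:face-coupling}: the relation \eqref{eq:face-coupling} writes $S$ as a linear combination of the boundary values $Q(\theta_b\pm w_b(z))$ with multipliers $\tfrac12(1\pm w_b'(z))$, and since the Snell map and its derivative are algebraic on $\Sigma_{\mathrm{lem}}$ (cf.\ \eqref{eq:gprime}) they are meromorphic with at worst algebraic growth and carry no poles onto $\gamma$. Hence $S_{\mathrm{scat}}$ inherits strip-meromorphy, finitely many simple poles displaced by $\varepsilon$, polynomial growth, and the gauge normalization. With all three hypotheses verified for both $Q_{\mathrm{scat}}$ and $S_{\mathrm{scat}}$, the claim follows by a direct application of Proposition~\ref{prop:radiation-meixner}.
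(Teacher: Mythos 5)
Your proposal follows the same overall route as the paper's proof: verify the three hypotheses of Proposition~\ref{prop:radiation-meixner} using the canonical decomposition of Theorem~\ref{thm:nodoublecount} for strip-meromorphy and the pole count, the identification $\Im\zeta\to+\infty\Leftrightarrow t\to0\Leftrightarrow u\to u_0$ together with the gauge $Q_{\mathrm{scat}}(u_0)=0$ for \eqref{eq:growth-gauge}, the $2\pi$-periodicity of $s_\zeta$ for the growth bound \eqref{eq:growth}, and the face coupling \eqref{eq:face-coupling} to transfer everything to $S_{\mathrm{scat}}$. Where you genuinely diverge is the growth bound, and there your treatment is more careful than the paper's. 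The paper simply asserts that $Q_{\mathrm{scat}}$ is $2\pi$-periodic in $\Re\zeta$ and concludes $N=0$; but periodicity of $s_\zeta$ only gives periodicity of $(t(\zeta),Y(\zeta))$, hence periodicity of $u(\zeta)$ \emph{modulo the lattice}, and since $\zeta_W$ is only quasi-periodic the composition $Q_{\mathrm{scat}}(u(\zeta))$ need not be periodic unless the monodromy period vanishes or the relevant coefficient sums do. Your shift relation $u(\zeta+2\pi)=u(\zeta)+\omega$ and the resulting $O(1+|\Re\zeta|)$ bound (so $N\le1$, still within \eqref{eq:growth}) is the honest version of this step. One refinement you should still make: your increment $\eta_\omega\sum_\ell C_\ell$ accounts only for the explicit $\zeta_W$-sum in \eqref{eq:Qscat-canonical}, but the remainder $R$ is built from $A$ and $B$, which are themselves $\zeta_W$-sums and pick up increments $\eta_\omega\sum_\ell\alpha_\ell$ and $\eta_\omega\sum_\ell\beta_\ell$ under $u\mapsto u+\omega$; through \eqref{eq:P13} these enter multiplied by the elliptic (hence periodic, but non-constant) factors $1/(4t^4)$ and $\beta_{\mathrm{ch}}/(4t^4)$, so the total increment per period is an elliptic function of $u$ rather than a constant. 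This does not damage the polynomial-growth conclusion along a contour kept away from $t=0$ and the poles, but the shift relation as you wrote it is not literally correct and should be amended accordingly.
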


\begin{proof}
By Theorem~\ref{thm:nodoublecount}, $Q_{\mathrm{scat}}$ is a finite Weierstrass--$\zeta_W$ sum over the scattered pole set plus an elliptic remainder
$R(u)$ that is analytic at $u_0$ and at all forcing poles. On the physical branch, $\Im\zeta\to+\infty$ corresponds to $t(\zeta)\to0$ and hence
$u(\zeta)\to u_0$, so the normalization $Q_{\mathrm{scat}}(u_0)=0$ gives \eqref{eq:growth-gauge}. The same reasoning applies to $S_{\mathrm{scat}}$,
obtained from $Q$ by the face reconstruction \eqref{eq:face-coupling}.
The only singularities of $Q_{\mathrm{scat}}$ and $S_{\mathrm{scat}}$ in the strip are the prescribed simple poles (with $\varepsilon>0$ displacing them
away from $\gamma$), and $Q_{\mathrm{scat}},S_{\mathrm{scat}}$ are $2\pi$-periodic in $\Re\zeta$ away from poles because $s_\zeta=e^{\ii(\zeta-\theta_w)}$
is $2\pi$-periodic. Hence the growth condition \eqref{eq:growth} holds (in fact with $N=0$) on compact subsets of the strip avoiding the poles.
Proposition~\ref{prop:radiation-meixner} applies.
\end{proof}

\section{Far-field diffraction coefficient}\label{sec:farfield}
\paragraph{Steepest-descent justification.}
The contour deformation and stationary-phase evaluation used in this section are justified under standard analyticity and growth hypotheses for
$Q_{\mathrm{scat}}(\zeta)$ in a strip containing the real axis; see, for example, \cite[\S2.4]{BleisteinHandelsman1986} or \cite[Ch.~II]{Wong2001}.
In the present lemniscatic case these hypotheses are met for $\varepsilon>0$ because $Q_{\mathrm{scat}}$ is given by an explicit elliptic-function
representation (Theorem~\ref{thm:nodoublecount}) and the forcing poles are displaced off the real $\zeta$-axis by the limiting absorption parameter.

\begin{theorem}[Diffraction coefficient]\label{thm:farfield}
Assume the analytic strip and growth framework of \S\ref{sec:sommerfeld} and the uniqueness principle of Lemma~\ref{lem:sommerfeld-nullity}.
Then the diffracted far-field coefficient is
\begin{equation}\label{eq:diffraction-coefficient}
D(\theta,\theta_i)=e^{-\ii 3\pi/4}\sqrt{\frac{2}{\pi k_0}}\,Q_{\mathrm{scat}}(\theta),
\end{equation}
where $Q_{\mathrm{scat}}(\theta)$ denotes the physical boundary value of $Q_{\mathrm{scat}}(\zeta)$ at $\zeta=\theta$, obtained by evaluating the
physical lift $u(\zeta)$ and taking the limiting absorption limit $\varepsilon\to0^+$ at the end.
\end{theorem}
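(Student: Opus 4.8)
The plan is to evaluate the Sommerfeld integral \eqref{eq:sommerfeld-medium0} for $u^{(0)}_{\mathrm{scat}}$ by steepest descent as $r\to\infty$ and read off the coefficient of the outgoing cylindrical wave $e^{\ii k_0 r}/\sqrt{r}$, which by definition is $D(\theta,\theta_i)$. The first step is to observe that, although the integrand difference $Q_{\mathrm{scat}}(\theta+z)-Q_{\mathrm{scat}}(\theta-z)$ vanishes at the phase saddle $z=0$, the representation can be symmetrized so that this apparent degeneracy disappears. Substituting $z\mapsto-z$ in the term carrying $Q_{\mathrm{scat}}(\theta-z)$ and using $\cos(-z)=\cos z$ folds the two pieces into a single integral of $Q_{\mathrm{scat}}(\theta+z)$ over the doubled Sommerfeld contour $\Gamma=\gamma\cup(-\gamma)$,
\[
u^{(0)}_{\mathrm{scat}}(r,\theta)=\frac{1}{2\pi\ii}\int_{\Gamma} e^{\ii k_0 r\cos z}\,Q_{\mathrm{scat}}(\theta+z)\,\dd z .
\]
Now the integrand amplitude at the saddle is $Q_{\mathrm{scat}}(\theta)$, generically nonzero, and the diffracted wave is captured by the saddle contribution.

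Second, I would justify deforming $\Gamma$ onto the steepest-descent path through $z=0$. There $\cos z=1-\tfrac12 z^2+O(z^4)$, so the phase value is $\cos 0=1$ (giving the outgoing factor $e^{\ii k_0 r}$) and the phase second derivative is $-1$, whence the descent ray is $z=e^{-\ii\pi/4}\sigma$ with $\sigma\in\mathbb{R}$. The analyticity and growth hypotheses of \S\ref{sec:sommerfeld}, together with the limiting-absorption displacement $\varepsilon>0$ and the explicit elliptic representation of Theorem~\ref{thm:nodoublecount} controlling growth along $\Gamma$, ensure that for $\theta$ away from shadow/reflection boundaries no pole of $Q_{\mathrm{scat}}(\theta+\cdot)$ is swept during the deformation, so the saddle supplies the entire leading-order term (cf.\ \cite[\S2.4]{BleisteinHandelsman1986}, \cite[Ch.~II]{Wong2001}).

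Third, the Gaussian evaluation along a single descent copy yields
\[
\frac{1}{2\pi\ii}\int_{\mathrm{SDP}}e^{\ii k_0 r\cos z}\,Q_{\mathrm{scat}}(\theta+z)\,\dd z
\sim \frac{1}{2\pi\ii}\,Q_{\mathrm{scat}}(\theta)\,e^{\ii k_0 r}\,e^{-\ii\pi/4}\sqrt{\frac{2\pi}{k_0 r}},
\]
and the two mirror copies constituting $\Gamma$ contribute with the same sign, producing an overall factor $2$. Collecting $\tfrac{1}{2\pi\ii}=\tfrac{1}{2\pi}e^{-\ii\pi/2}$, the descent phase $e^{-\ii\pi/4}$, and the doubling, the accumulated phase is $e^{-\ii 3\pi/4}$ and the amplitude is $2\cdot\tfrac{1}{\sqrt{2\pi k_0}}=\sqrt{2/(\pi k_0)}$, so that $u^{(0)}_{\mathrm{scat}}\sim D(\theta,\theta_i)\,e^{\ii k_0 r}/\sqrt{r}$ with $D$ exactly as claimed. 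Finally I would pass to $\varepsilon\to0^+$, interpreting $Q_{\mathrm{scat}}(\theta)$ as the physical-branch boundary value obtained through the lift $u(\zeta)$ of \S\ref{sec:uniformization}, and noting that the exclusion of the incident label (Definition~\ref{def:allocation}, Theorem~\ref{thm:incident-analyticity}) guarantees the saddle contribution is the pure diffracted field rather than a geometric-optics term.

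I expect the main obstacle to be the bookkeeping of the constant: the symmetrization onto $\Gamma$ both removes the spurious vanishing at $z=0$ and supplies the factor $2$, and this must combine with the steepest-descent phase to assemble \emph{exactly} into $e^{-\ii 3\pi/4}$ and $\sqrt{2/(\pi k_0)}$, leaving no residual orientation sign. A secondary difficulty is making the deformation rigorous uniformly in $\theta$: one must track the geometric-optics poles of $Q_{\mathrm{scat}}(\theta+\cdot)$ and confirm that, for $\theta$ bounded away from the reflection/shadow directions, they remain off the descent path as $\varepsilon\to0^+$, so that the stated residue-free saddle contribution is the genuine leading diffracted term.
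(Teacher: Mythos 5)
Your proposal follows essentially the same route as the paper's proof: the same $z\mapsto-z$ symmetrization folding the integrand onto the doubled contour $\gamma\cup(-\gamma)$, the same steepest-descent evaluation at the saddle $z=0$ via $\int_{-\infty}^{\infty}e^{-\ii a x^2}\dd x=\sqrt{\pi/a}\,e^{-\ii\pi/4}$, and the same assembly of the factor $2$, $1/(2\pi\ii)$, and $e^{-\ii\pi/4}$ into $e^{-\ii3\pi/4}\sqrt{2/(\pi k_0)}$. Your additional remarks on tracking geometric-optics poles away from shadow/reflection boundaries are a reasonable elaboration of the analyticity hypotheses the paper simply invokes.
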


\begin{proof}
Start from the Sommerfeld representation \eqref{eq:sommerfeld-medium0} for the scattered field in medium~0,
\[
u^{(0)}_{\mathrm{scat}}(r,\theta)=\frac{1}{2\pi\ii}\int_\gamma e^{\ii k_0 r\cos z}\,\big(Q_{\mathrm{scat}}(\theta+z)-Q_{\mathrm{scat}}(\theta-z)\big)\,\dd z.
\]
In the second term substitute $z\mapsto -z$ (so $\cos z$ is unchanged) to obtain
\[
u^{(0)}_{\mathrm{scat}}(r,\theta)=\frac{1}{2\pi\ii}\left(\int_\gamma+\int_{-\gamma}\right)e^{\ii k_0 r\cos z}\,Q_{\mathrm{scat}}(\theta+z)\,\dd z.
\]
Under the analyticity and strip-growth hypotheses stated above, the union $\gamma\cup(-\gamma)$ may be deformed to the steepest descent rays through
the saddle $z=0$, where $\cos z=1-\frac12 z^2+O(z^4)$. The leading contribution is therefore
\[
u^{(0)}_{\mathrm{scat}}(r,\theta)\sim \frac{2}{2\pi\ii}\,Q_{\mathrm{scat}}(\theta)\,e^{\ii k_0 r}\int_{-\infty}^{\infty}\exp\!\left(-\ii\frac{k_0 r}{2}x^2\right)\dd x,
\qquad r\to\infty.
\]
Using $\int_{-\infty}^{\infty}e^{-\ii a x^2}\dd x=\sqrt{\pi/a}\,e^{-\ii\pi/4}$ for $a>0$ yields
\[
u^{(0)}_{\mathrm{scat}}(r,\theta)\sim \frac{e^{\ii k_0 r}}{\sqrt{r}}\,D(\theta,\theta_i),
\qquad
D(\theta,\theta_i)=e^{-\ii3\pi/4}\sqrt{\frac{2}{\pi k_0}}\,Q_{\mathrm{scat}}(\theta),
\]
which is \eqref{eq:diffraction-coefficient}.
This normalization is consistent with the standard two-dimensional GTD convention for wedge diffraction \cite{Keller1962}.
\end{proof}

\section{Conclusion and outlook}\label{sec:conclusion}
This paper provides an explicit elliptic-function reconstruction for the Sommerfeld spectral density $Q_{\mathrm{scat}}$ in the
impedance-matched right-angle penetrable wedge at refractive index ratio $\nu=\sqrt{2}$.
The construction is algebraic on the lemniscatic Snell surface $\Sigma_{\mathrm{lem}}$ and is written in terms of finite Weierstrass
zeta differences on the square lattice together with an explicitly constructed holomorphic remainder that removes the partial-index singular
channel without double counting.

\paragraph{What is proved/constructed.}
\begin{itemize}[leftmargin=2.2em]
\item A Sommerfeld spectral representation for the transmission problem is reduced to a closed two-face functional system in the spectral variable
(\S\ref{sec:sommerfeld}).
\item In the integrable configuration $(\theta_w,\nu,\rho)=(\pi/4,\sqrt{2},1)$, the spectral map closes on the lemniscatic curve
$\Sigma_{\mathrm{lem}}$ and is uniformized by square-lattice Weierstrass functions (Sections~\ref{sec:snell}--\ref{sec:uniformization}).
\item The jet-killing polynomials $p,q$ are constructed so that $A(u)+p(t(u))=O(t(u)^4)$ and $B(u)+q(t(u))=O(t(u)^4)$ on
$\Omega_{\mathrm{phys}}^+$, yielding analyticity at the physical basepoint $u_0$ (Section~\ref{sec:jetkilling}).
\item A canonical no-double-counting decomposition $Q_{\mathrm{scat}}(u)=\sum_{\ell\in I_{\mathrm{scat}}}C_\ell[\zeta_W(u-u_\ell)-\zeta_W(u_0-u_\ell)]+R(u)$
is obtained, with $R$ pole-free at all forcing points and analytic at $u_0$ (Theorem~\ref{thm:nodoublecount}).
\item The far-field diffraction coefficient is expressed in terms of the physical boundary value $Q_{\mathrm{scat}}(\theta)$ (Theorem~\ref{thm:farfield}).
\end{itemize}

\paragraph{Limitations.}
\begin{itemize}[leftmargin=2.2em]
\item The result is restricted to the integrable lemniscatic regime $(\theta_w,\nu,\rho)=(\pi/4,\sqrt{2},1)$; it does not address general wedge angles,
general contrast, or non-impedance-matched media.
\item Outside special closures of the Snell surface, the spectral reductions typically lead to matrix Wiener--Hopf/Riemann--Hilbert factorization problems
that are not treated here.
\item The present explicit tables are derived for the right-angle configuration; their analogues for other parameters require new residue analysis.
\item Reciprocity of the extracted far-field coefficient is not established; numerical tests indicate a non-reciprocal coefficient for generic angles, so the physical validity of the closed form is not claimed.
\end{itemize}

\paragraph{Context and outlook.}
Complete analytic solutions for penetrable (transmission) wedge diffraction are rare and, outside of special configurations, the spectral reductions
typically lead to matrix or multi-variable factorization problems. Even the right-angled penetrable wedge has been treated primarily by semi-analytical
and asymptotic methods; see Antipov--Silvestrov \cite{AntipovSilvestrov2007}, Nethercote--Assier--Abrahams \cite{NethercoteAssierAbrahams2020} and
Kunz--Assier \cite{KunzAssier2023} for penetrable-wedge analyses, and Groth--Hewett--Langdon \cite{GrothHewettLangdon2018} for high-frequency
numerical-asymptotic methods for penetrable convex polygons in which local corner diffraction is central.
A natural direction is to identify other parameter regimes in which the Snell surface closes algebraically (possibly at higher genus) and to determine
whether analogous jet-killing and residue-cancellation mechanisms can be carried out.

\section{Symbolic evaluation recipe}\label{sec:recipe}
Given $(\theta_i,\varepsilon>0)$:
\begin{enumerate}[leftmargin=2.2em,label=\arabic*.]
\item Enumerate all pole labels $\ell=(m,\sigma,\varepsilon_w)$ with $m\in\{0,1,2,3\}$, $\sigma,\varepsilon_w\in\{\pm1\}$, and remove
$\ell_{\mathrm{inc}}=(0,+,-)$.
\item For each $\ell$, compute $j=j(\sigma,\varepsilon_w)$ and $\varepsilon_j$.
\item For each $\ell$, compute $b=b_{m,\sigma,\varepsilon_w}$ from \eqref{eq:forcing-phases}, then compute the inside root $t_q=t_{\mathrm{in}}(b)$ from
\eqref{eq:tpm} and set $Y_q=2bt_q-\sqrt{2}(t_q^2+1)$. Transport to $(t_\ell,Y_\ell)$ by \eqref{eq:transport}.
\item Define $u_\ell$ (physical lift) by the uniformization \eqref{eq:u-ell}.
\item Compute $r_I(\ell)$ from \eqref{eq:rI-explicit}.
\item Compute $\alpha_\ell,\beta_\ell,C_\ell$ from Propositions~\ref{prop:alpha-table}--\ref{prop:C-table}.
\item Compute $W_{0\ell},W_{1\ell},W_{2\ell}$ from \eqref{eq:W-shift}.
\item Compute jet coefficients $(p_1,p_2,p_3)$ and $(q_1,q_2,q_3)$ from \eqref{eq:p-coeffs}--\eqref{eq:q-coeffs}.
\item Build $A(u)$ and $B(u)$ from \eqref{eq:A-B-def}.
\item Build $P_{1,3}(u)$ and compute $d_\ell$ from \eqref{eq:d-def} or Proposition~\ref{prop:d-table}. Then form $R(u)$ via \eqref{eq:R-def}.
\item Evaluate $Q_{\mathrm{scat}}(u)$ via Theorem~\ref{thm:nodoublecount}.
\item For a given $\zeta$, compute $(t(\zeta),Y(\zeta))$ from the physical branch of $s(t,Y)=s_\zeta$ \eqref{eq:physical-branch-map}, lift to $u(\zeta)$ via
\eqref{eq:uniformization}, and evaluate $Q_{\mathrm{scat}}(\zeta)=Q_{\mathrm{scat}}(u(\zeta))$.
\item Obtain the far-field diffraction coefficient from \eqref{eq:diffraction-coefficient}.
\end{enumerate}

\appendix
\section{Derivation of the residue tables}\label{app:residues}
This appendix explains how Propositions~\ref{prop:alpha-table}--\ref{prop:C-table} are obtained from the global two-face spectral system
in the lemniscatic configuration $(\theta_w,\nu,\rho)=(\pi/4,\sqrt{2},1)$.
The computation is finite: one reduces the two-face coupling relations to a four-point orbit system on the Snell surface, applies a length--4 discrete
Fourier transform (DFT) to decouple the system into four $2\times2$ mode problems, and then evaluates the forcing residues at each pole $u_\ell$.

\subsection{Scope and provenance of the residue tables}
The tables in Propositions~\ref{prop:alpha-table}--\ref{prop:C-table} are not independent assumptions: they are explicit solutions of the
residue-matching conditions obtained by taking residues of the mode system \eqref{eq:mode-system} at the forcing poles and propagating those residues
through the inverse mode matrices $M_{U,k}^{-1}$.
Once the mode matrices and the local coefficients $(A_0,B_0,A_1,B_1)$ are fixed, the derivation reduces to finite algebra.

All simplifications in this appendix use only:
\begin{itemize}[leftmargin=2.2em]
\item the lemniscatic curve identity $Y^2=2(t^4+1)$,
\item the root-of-unity relations $\omega=\ii$ and $\omega^4=1$,
\item the definitions of $g'(t,Y)$, $\tau(t,Y)$, and the local coefficients $(A_0,B_0,A_1,B_1)$.
\end{itemize}
In particular, we do not invoke the additional pole relations $Y=2bt-\sqrt{2}(t^2+1)$ used in constructing the poles themselves; the residue tables
are identities on the Snell surface.

\subsection{Orbit reduction and mode matrices}
For the right-angle wedge the two faces differ by a rotation of $2\theta_w=\pi/2$.
On the lemniscatic surface this rotation is implemented by the automorphism $\tau(t,Y)=(\ii t,-Y)$, see \S\ref{sec:snell}.
After orbit closure one obtains, for each DFT mode $k\in\{0,1,2,3\}$, a matrix Wiener--Hopf/Riemann--Hilbert jump relation of the form
\begin{equation}\label{eq:mode-system}
M_{U,k}(t,Y)\,U_{k}^{b,+}(t,Y)=M_{V,k}(t,Y)\,U_{k}^{b,-}(t,Y)+H_k^b(t,Y),\qquad (t,Y)\in\Gamma,
\end{equation}
where $\Gamma=\{|t|=1\}$ is the physical cut, $U_k^{b,\pm}$ denote the boundary values on the two sides of $\Gamma$, and $H_k^b$ is the DFT forcing term
generated by the incident wave. A derivation of such an orbit/DFT reduction for penetrable wedge systems is standard; see, for example,
\cite{Rawlins1999}.

In the impedance-matched case the face-coupling matrix in Proposition~\ref{prop:face-coupling} depends only on the Snell derivative
$w'(z)=\dd w/\dd z$.
In the lemniscatic formulation one has $w'(z)=g'(t,Y)$ with
\[
g'(t,Y)=\frac{\sqrt{2}(t^2-1)}{Y},\qquad
g'(\tau(t,Y))=\frac{\sqrt{2}(t^2+1)}{Y},
\]
by \eqref{eq:gprime} and $\tau(t,Y)=(\ii t,-Y)$.
Define
\[
\begin{aligned}
A_0&:=\frac12\bigl(1+g'(t,Y)\bigr), & B_0&:=\frac12\bigl(1-g'(t,Y)\bigr),\\
A_1&:=\frac12\bigl(1-g'(\tau(t,Y))\bigr), & B_1&:=\frac12\bigl(1+g'(\tau(t,Y))\bigr).
\end{aligned}
\]
and let $\omega=\ii$.
A convenient normalization of the mode coefficient matrices is
\begin{equation}\label{eq:MU-MV}
M_{U,k}=
\begin{pmatrix}
-\omega^{-k}A_1 & A_0\\
-\omega^{k}B_1 & B_0
\end{pmatrix},
\qquad
M_{V,k}=
\begin{pmatrix}
-\omega^{-k}B_1 & B_0\\
-\omega^{k}A_1 & A_0
\end{pmatrix}.
\end{equation}
Write $\Delta_{U,k}:=\det M_{U,k}=\omega^k A_0B_1-\omega^{-k}A_1B_0$.
Then
\begin{equation}\label{eq:MU-inv}
M_{U,k}^{-1}=\frac{1}{\Delta_{U,k}}
\begin{pmatrix}
B_0 & -A_0\\
\omega^k B_1 & -\omega^{-k}A_1
\end{pmatrix}.
\end{equation}
Similarly $\Delta_{V,k}=\omega^k A_1B_0-\omega^{-k}A_0B_1$ and
\begin{equation}\label{eq:MV-inv}
M_{V,k}^{-1}=\frac{1}{\Delta_{V,k}}
\begin{pmatrix}
A_0 & -B_0\\
\omega^k A_1 & -\omega^{-k}B_1
\end{pmatrix}.
\end{equation}

\subsection{Forcing residues and coefficient extraction}
The forcing term $H_k^b$ is meromorphic on $\Sigma_{\mathrm{lem}}$ with simple poles at the forcing set $\{u_\ell\}$ defined in \S\ref{sec:poles}.
Its residues are computed directly from the incident spectral density and the local phase $w_m(u)$ along the corresponding orbit branch.
In particular, the scalar incident residue
\[
r_I(\ell)=\frac{\varepsilon_j}{w_m'(u_\ell)}
\]
is given explicitly by \eqref{eq:rI-explicit}, and the phase factors $\chi_m,\psi_m,\kappa_m$ are as in \eqref{eq:phase-symbols}.

For reproducibility, one may express the forcing residues in the mode variables in the following uniform form.
Let $\omega=\ii$ and define $j=j(\sigma,\varepsilon_w)$ by \eqref{eq:j-map}.
Let $(A_m,B_m)=(A_0,B_0)$ for $m$ even and $(A_m,B_m)=(A_1,B_1)$ for $m$ odd.
Then
\[
\Res_{u=u_\ell}H_k^b(u)=\omega^{-km} r_I(\ell)\,v_{k}^{(m,j)},
\]
where the $2$-vector $v_k^{(m,j)}$ is given, for $j=1,2,3,4$, by
\[
\begin{aligned}
j=1:&\quad v_k^{(m,1)}=\binom{\omega^{-k}A_m}{\omega^k B_m}, &
j=2:&\quad v_k^{(m,2)}=\binom{\omega^{-k}B_m}{\omega^k A_m},\\
j=3:&\quad v_k^{(m,3)}=\binom{-A_m}{-B_m}, &
j=4:&\quad v_k^{(m,4)}=\binom{-B_m}{-A_m}.
\end{aligned}
\]
Since the coefficient matrices $M_{U,k}$ are analytic and invertible at each forcing pole $u_\ell$ (the forcing poles occur away from the branch points),
the jump relation \eqref{eq:mode-system} implies that the residue vector of the mode solution is obtained by solving a $2\times2$ linear system:
\begin{equation}\label{eq:gk}
g_k(\ell):=\Res_{u=u_\ell}U_k^b(u)=M_{U,k}(u_\ell)^{-1}\,\Res_{u=u_\ell}H_k^b(u).
\end{equation}
The reconstruction coefficients used in the elliptic sum are extracted from these residue vectors.
A convenient choice (matching the definitions in \S\ref{sec:residues}) is
\begin{equation}\label{eq:coeff-extract}
\alpha_\ell:=t_\ell^{4}\,e_1^\top g_1(\ell),\qquad
\beta_\ell:=t_\ell^{4}\,e_2^\top g_3(\ell),\qquad
C_\ell:=\frac14\sum_{k=0}^{3} e_1^\top g_k(\ell),
\end{equation}
where $e_1=(1,0)^\top$ and $e_2=(0,1)^\top$.
Substituting \eqref{eq:MU-MV}--\eqref{eq:coeff-extract} together with the explicit forcing residues above and simplifying using only the lemniscatic relation
$Y^2=2(t^4+1)$ (and $\omega^4=1$) yields the closed forms recorded in Propositions~\ref{prop:alpha-table}--\ref{prop:C-table}.

\begin{lemma}[Symbolic verification of the tables]\label{lem:symbolic-verification}
Fix a forcing label $\ell=(m,\sigma,\varepsilon_w)$, form $j=j(\sigma,\varepsilon_w)$, and evaluate \eqref{eq:gk}--\eqref{eq:coeff-extract}
using the forcing residues and the explicit inverse \eqref{eq:MU-inv}.
After reducing with $Y^2=2(t^4+1)$, the resulting expressions for $\alpha_\ell,\beta_\ell,C_\ell$ coincide with the corresponding entries in
Propositions~\ref{prop:alpha-table}--\ref{prop:C-table}.
\end{lemma}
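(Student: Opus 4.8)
The plan is to treat Lemma~\ref{lem:symbolic-verification} as a finite, mechanical verification: for each of the (at most) eight residue cases indexed by the parity of $m$ and by $j\in\{1,2,3,4\}$, I would carry out the $2\times2$ solve \eqref{eq:gk} in closed form and read off the extraction formulas \eqref{eq:coeff-extract}, reducing throughout with the lemniscatic relation $Y^2=2(t^4+1)$. The first preparatory step is to record the elementary algebra of the local coefficients. Writing $g':=g'(t,Y)=\sqrt{2}(t^2-1)/Y$ and $g'_\tau:=g'(\tau(t,Y))=\sqrt{2}(t^2+1)/Y$, one has $A_0+B_0=A_1+B_1=1$, $A_0-B_0=g'$, $B_1-A_1=g'_\tau$, and, after substituting $Y^2=2(t^4+1)$,
\[
A_0B_0=\frac{t^2}{Y^2}=-A_1B_1,\qquad A_0^2+B_0^2=1-\frac{2t^2}{Y^2}.
\]
These identities reduce every scalar appearing in the solve to a rational function of $t$ (possibly times a single factor $Y$).

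The second step is to compute the two determinant symbols controlling the inverse \eqref{eq:MU-inv}. Setting $P:=A_0B_1+A_1B_0$ and $Q:=A_0B_1-A_1B_0$, a one-line expansion gives $P=\tfrac12(1+g'g'_\tau)=2t^4/Y^2$ and $Q=\tfrac12(g'+g'_\tau)=\sqrt2\,t^2/Y$, so that $\Delta_{U,k}=\omega^kA_0B_1-\omega^{-k}A_1B_0$ takes the four values $(Q,\ii P,-Q,-\ii P)$ for $k=0,1,2,3$. With the rows $e_1^\top M_{U,k}^{-1}=\Delta_{U,k}^{-1}(B_0,-A_0)$ and $e_2^\top M_{U,k}^{-1}=\Delta_{U,k}^{-1}(\omega^kB_1,-\omega^{-k}A_1)$ from \eqref{eq:MU-inv} in hand, the extraction of $\alpha_\ell$ and $\beta_\ell$ is direct, since each uses a single mode ($k=1$ and the first row for $\alpha_\ell$, $k=3$ and the second row for $\beta_\ell$). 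The required row-times-$v_k^{(m,j)}$ products collapse, via the identities above, to one of $A_0B_0$, $A_0^2+B_0^2$, $g'$, or $0$ for even $m$, and to the cross symbols $P,Q$ for odd $m$; dividing by $\Delta_{U,1}=\ii P=2\ii t^4/Y^2$ (respectively $\Delta_{U,3}=-\ii P$) and multiplying by $t_\ell^4$ cancels the $t^4/Y^2$ denominators cleanly and reproduces the entries of Propositions~\ref{prop:alpha-table}--\ref{prop:beta-table}. I would verify all eight cases this way, checking that the forcing phases $\omega^{-km}$ reproduce the tabulated symbols $\chi_m,\psi_m,\kappa_m$ (e.g.\ $\omega^{-m}=\chi_m$ for even $m$).

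The heart of the lemma---and the step I expect to be the main obstacle---is the coefficient $C_\ell=\tfrac14\sum_{k=0}^3 e_1^\top g_k(\ell)$, because it requires the full length-$4$ DFT sum rather than a single mode. Here the factor $\omega^{-km}$ from the forcing residue combines with the $\omega^{\pm k}$ inside $v_k^{(m,j)}$ (present for $j=1,2$ and absent for $j=3,4$) and with the alternating denominators $\Delta_{U,k}=(Q,\ii P,-Q,-\ii P)$, and one must verify that the discrete orthogonality $\tfrac14\sum_k \omega^{k(a-b)}=\delta_{a\equiv b\,(\mathrm{mod}\,4)}$ projects the sum onto exactly the tabulated principal part. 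This is precisely the mechanism that splits the $(j=2,\ m\ \text{even})$ case into the distinct $m=0$ and $m=2$ entries of Proposition~\ref{prop:C-table} and forces the vanishing $C_\ell=0$ in the cases $(j\in\{3,4\},\ m\ \text{even})$ and $(j\in\{1,2\},\ m\ \text{odd})$.

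To carry out the $C_\ell$ sum I would write each summand as (phase)$\times$(rational-in-$t$ numerator)$/\Delta_{U,k}$, separate the two distinct denominators $Q$ and $P$, and then collect the four phase contributions mode by mode. The residual algebra is routine once $Y^2=2(t^4+1)$ and $\omega^4=1$ are applied; what genuinely requires care is the bookkeeping of signs and powers of $\omega$ across the four modes, where errors are most likely. Since the tables in Propositions~\ref{prop:alpha-table}--\ref{prop:C-table} are thereby realized as the outputs of a closed finite computation that uses only the lemniscatic identity and the roots-of-unity relations, this establishes the claimed coincidence and completes the proof.
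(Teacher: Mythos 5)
Your proposal is correct and follows essentially the same route as the paper, which likewise proves the lemma by a finite case-by-case algebraic verification of \eqref{eq:gk}--\eqref{eq:coeff-extract} reduced with $Y^2=2(t^4+1)$ and $\omega^4=1$; the paper's own proof is in fact terser, merely asserting that the computation is finite and rational, whereas you supply the useful intermediate identities ($A_0B_0=t^2/Y^2=-A_1B_1$, $P=2t^4/Y^2$, $Q=\sqrt{2}\,t^2/Y$, and the determinant cycle $\Delta_{U,k}=(Q,\ii P,-Q,-\ii P)$), which all check out. No gap.
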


\begin{proof}
All quantities entering \eqref{eq:gk}--\eqref{eq:coeff-extract} are rational in $(t,Y)$ and $\omega$ once the lemniscatic curve constraint
$Y^2=2(t^4+1)$ is imposed. For a fixed case ($m\bmod2$, $j$), insert the forcing residue vector $\Res H_k^b(u_\ell)$, compute
$g_k(\ell)=M_{U,k}^{-1}(u_\ell)\Res H_k^b(u_\ell)$, and then read off the linear functionals in \eqref{eq:coeff-extract}.
The subsequent simplification is algebraic and uses only $Y^2=2(t^4+1)$ and $\omega^4=1$.
\end{proof}

\section*{References}

\end{document}